\begin{document}
\title{Simplifying Non-Simple Fan-Planar Drawings\thanks{This work was initiated at the
    $5^{th}$ DACH Workshop on Arrangements and Drawings, which was conducted online, via gathertown, in March 2021. The authors thank
    the organizers of the workshop for inviting us and providing a
    productive working atmosphere.
    B.\ K.\ was partially supported by DFG project WO$\,$758/11-1.
    M.\ M.\ R.\ is supported by the Swiss National Science Foundation within the collaborative DACH project \emph{Arrangements and Drawings} as SNSF Project 200021E-171681.
    K.~ K.\ is supported by DFG Project MU~3501/3-1 and within the Research Training Group GRK~2434 \emph{Facets of Complexity}. F.~S.\ is supported by the
German Research Foundation DFG Project FE 340/12-1.}}

%
%
\author{Boris Klemz\inst{1}\orcidID{0000-0002-4532-3765} \and
Kristin Knorr\inst{2}\orcidID{0000-0003-4239-424X} \and
Meghana M.\ Reddy\inst{3}\orcidID{0000-0001-9185-1246} \and 
Felix Schr\"oder\inst{4}\orcidID{0000-0001-8563-3517} }
\authorrunning{B. Klemz et al.}
%
\institute{Institut f\"ur Informatik, Universit\"at W\"urzburg, W\"urzburg, Germany \email{boris.klemz@uni-wuerzburg.de}\and
Institut f\"ur Informatik, Freie Universit\"at Berlin, Berlin, Germany \email{knorrkri@inf.fu-berlin.de}\and
Department of Computer Science, ETH, Z\"urich, Switzerland  \email{meghana.mreddy@inf.ethz.ch} \and
Institut f\"ur Mathematik, Technische Universit\"at Berlin, Berlin, Germany \email{fschroed@math.tu-berlin.de}}
\maketitle              

\begin{abstract} 
A drawing of a graph is fan-planar if the edges intersecting a common edge $a$ share a vertex $A$ on the same side of $a$. More precisely, orienting $e$ arbitrarily and the other edges towards $A$ results in a consistent orientation of the crossings. So far, fan-planar drawings have only been considered in the context of simple drawings, where any two edges share at most one point, including endpoints. We show that every non-simple fan-planar drawing can be redrawn as a simple fan-planar drawing of the same graph while not introducing additional crossings. Combined with previous results on fan-planar drawings, this yields that $n$-vertex-graphs having such a drawing can have at most $6.5n$ edges and that the recognition of such graphs is NP-hard.
We thereby answer an open problem posed by Kaufmann and Ueckerdt in 2014.

\keywords{Simple topological graphs  \and Fan-planar graphs \and Beyond-planar graphs \and Graph drawing.}
\end{abstract}
\section{Introduction}
\label{sec:intro}

In a \emph{fan-planar} drawing of a graph, each edge~$a$ is either not involved in any crossing or its crossing edges 
$c_1,\dots,c_k$ have a common endpoint~$A$ that is on a \emph{common side} of~$a$, 
i.e., orienting~$a$ arbitrarily and the edges $c_1,\dots,c_k$ towards~$A$ results in a consistent orientation of the crossings on~$a$ (either~$a$ crosses each $c_i$ from left to right at each crossing, or it crosses each $c_i$ from right to left at each crossing);
 for illustrations refer to Figure~\ref{fig:intro-examples}.
We call~$A$ the \emph{special} vertex of~$a$.
All \emph{graphs} in this paper are simple, that is, we do not allow parallel edges or self-loops. 
Hence, the vertex~$A$ is uniquely defined if~$k\ge 2$.
If $k=1$, then~$A$ is an arbitrary endpoint of~$c_1$. 

\begin{figure}[ht]
  \centering
  \includegraphics[scale=1,page=1]
  {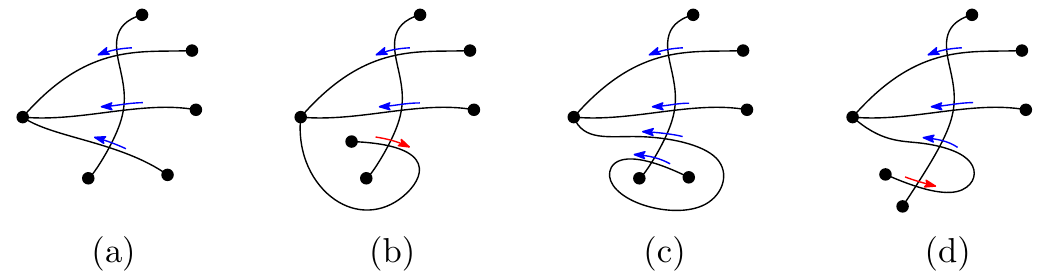}
  \caption{Drawings that are (a) simple and fan-planar, (b) simple and not fan-planar, (c) non-simple and fan-planar, and (d) non-simple and not fan-planar.  
  }
  \label{fig:intro-examples}
\end{figure}

Previous literature is exclusively concerned with fan-planar drawings that are also \emph{simple},
meaning that each pair of edges intersects in at most one point, which can be either an endpoint or a proper crossing.
Simple drawings can be characterized in terms of two forbidden crossing configurations\footnote{In the literature, usually more obstructions are mentioned, which we exclude for \emph{all} drawings (simple or not), see Section \ref{sec:terminology}. } (see Figure~\ref{fig:intro-forbidden}):
\begin{enumerate}[leftmargin=*,label={S\arabic*}]
\item Two adjacent edges cross.
\item Two edges cross at least twice.
\end{enumerate}
Simple drawings that are fan-planar can be characterized in terms of two additional forbidden crossing configurations~\cite{KU_dfpg_2014} (see Figure~\ref{fig:intro-forbidden}):
\begin{enumerate}[leftmargin=*,label={SF\arabic*}]
\item Two independent edges cross a common third edge.
\item Two adjacent edges cross a third edge $a$ such that their common endpoint~$A$ is not on a common side of~$a$.
\end{enumerate}
In this paper, we study non-simple fan-planar drawings and how to turn them into simple fan-planar drawings.

\begin{figure}[ht]
  \centering
  \includegraphics[scale=1,page=6]
  {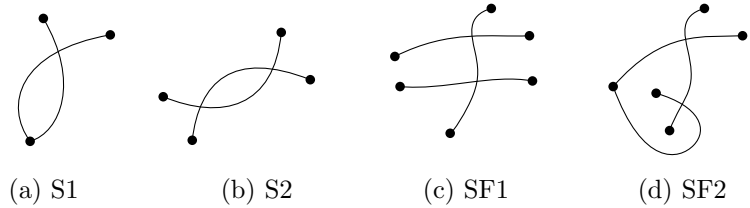}
  \caption{Forbidden configurations in simple fan-planar drawings.
  }
  \label{fig:intro-forbidden}
\end{figure}
\paragraph{Previous and related work.}
A drawing is $k$-\emph{planar} if each edge is crossed at most $k$ times and a graph is $k$-\emph{planar}, if it admits such a drawing~\cite{DBLP:journals/combinatorica/PachT97}.
A $k$-\emph{quasiplanar} graph can be drawn such that no $k$ edges mutually cross -- such a drawing is called $k$-\emph{quasiplanar}~\cite{DBLP:journals/combinatorica/AgarwalAPPS97}.
Kaufmann and Ueckerdt~\cite{KU_dfpg_2014} introduced the notion of fan-planarity in $2014$.
They describe the class of graphs representable by simple fan-planar drawings\footnote{In~\cite{KU_dfpg_2014}, these graphs are called \emph{fan-planar}. We do not use this terminology to avoid mix-ups with the class of graphs admitting (not necessarily simple) fan-planar drawings.} as somewhere between $1$-planar graphs and $3$-quasiplanar graphs.
Indeed, every $1$-planar graph admits a simple $1$-planar drawing.
Since such a drawing cannot contain configuration~SF1 or~SF2, it is fan-planar.
Moreover, a simple fan-planar drawing cannot contain three mutually crossing edges and, therefore, it is $3$-quasiplanar.
Binucci et al.~\cite{BGDMPST_fppc_2015} have shown that for each $k\ge 2$ the class of graphs admitting simple $k$-planar drawings and the class of graphs admitting simple fan-planar drawings are incomparable.
In contrast, every so-called optimal $2$-planar graph admits a simple fan-planar drawing~\cite{DBLP:conf/compgeom/Bekos0R17}. This follows from the fact that these graphs can be characterized as the graphs obtained by drawing a pentagram in the interior of each face of a pentagulation~\cite{DBLP:conf/compgeom/Bekos0R17}, which yields a fan-planar drawing.
Angelini et al.~\cite{DBLP:journals/tcs/AngeliniBKKS18} introduced a drawing style that combines fan-planarity with a visualization technique called edge bundling~\cite{DBLP:journals/tvcg/Holten06,DBLP:journals/cgf/HoltenW09,DBLP:journals/cgf/TeleaE10}.
Each of their so-called $1$-sided $1$-fan-bundle-planar drawings represents a graph that is also realizable as a simple fan-planar drawing, but the converse is not true~\cite{DBLP:journals/tcs/AngeliniBKKS18}. Brandenburg \cite{Bran2020} examines \emph{fan-crossing} drawings, where all edges crossing a common edge share a common endpoint (in particular, this implies that SF1 is forbidden), as well as \emph{adjacency-crossing} drawings, where SF1 is the only obstruction.
Simple fan-planar drawings are somewhat opposite to simple $k$-\emph{fan-crossing-free}~\cite{DBLP:journals/algorithmica/CheongHKK15} drawings, where no $k\ge 2$ adjacent edges cross another common edge.

The maximum number of edges in a simple fan-planar drawing on $n$ vertices is upperbounded by $6.5n-20$~\cite{KU_dfpg_2014}, which follows from the known density bounds for $3$-quasiplanar graphs~\cite{DBLP:journals/jct/AckermanT07}.
A better upper bound of $5n-10$ edges was claimed in a preprint~\cite{KU_dfpg_2014}.
However, the corresponding proof appears to be flawed.
We spoke with the authors and they confirmed that the current version of their proof is not correct and that they do not see a simple way to fix it\footnote{More specifically, the statement and proof of~\cite[Lemma 1]{KU_dfpg_2014} are incorrect. A counterexample can be obtained by removing the edge~$g$ from the construction illustrated in Figure~\ref{fig:reroute_b_fig} (vertices $R,B$ correspond to the vertices $u,w$ in~\cite[Lemma 1]{KU_dfpg_2014}); for a formal description of the construction see Lemma~\ref{lem:sequence_of_edges}.

After our submission to GD'21, the authors of~\cite{KU_dfpg_2014} have uploaded a new version~\cite{KU_dfpg_2014V2} of their preprint in which they state a different definition of fan-planarity with an additional forbidden crossing configuration; also see~\cite[last paragraph of Section~1]{KU_dfpg_2014V2}.}.
Kaufmann and Ueckerdt~\cite{KU_dfpg_2014} described an infinite family of simple fan-planar drawings with $5n-10$ edges.
The same lower bound also follows from the aforementioned connection to optimal $2$-planar graphs~\cite{DBLP:conf/compgeom/Bekos0R17}.

The recognition of graphs realizable as simple fan-planar drawings is NP-hard~\cite{BGDMPST_fppc_2015}.
The same statement also holds in the fixed rotation system setting~\cite{DBLP:journals/algorithmica/BekosCGHK17}, where the cyclic order of edges incident to each vertex is prescribed as part of the input.
Consequently, efficient algorithms have only been discovered for special graph classes~\cite{DBLP:journals/algorithmica/BekosCGHK17} and for restricted drawing styles
\cite{DBLP:journals/algorithmica/BekosCGHK17,DBLP:journals/jgaa/BinucciCDGKKMT17}.

For a more comprehensive overview of previous work related to fan-planarity, we refer to a very recent survey article dedicated to fan-planarity due to Bekos and Grilli~\cite{DBLP:books/sp/20/Bekos020}.
The study of fan-planarity also falls in line with the recent trend of studying so-called beyond-planar graph classes, whose corresponding drawing styles permit crossings in restricted ways only.
Apart from $k$-planar~\cite{DBLP:journals/combinatorica/PachT97}, $k$-quasiplanar~\cite{DBLP:journals/combinatorica/AgarwalAPPS97}, $k$-fan-crossing-free~\cite{DBLP:journals/algorithmica/CheongHKK15}, fan-bundle-planar~\cite{DBLP:journals/tcs/AngeliniBKKS18}, fan-crossing~\cite{Bran2020}, adjacency-crossing~\cite{KU_dfpg_2014}, and fan-planar~\cite{KU_dfpg_2014} drawings, which have already been mentioned above, several other classes of beyond-planar graphs and their corresponding drawing styles have been studied, e.g.: $k$-gap-planar drawings~\cite{DBLP:journals/tcs/BaeBCEE0HKMRT18} (each crossing is assigned to one of the involved edges such that each edge is assigned at most~$k$ crossings), RAC-drawings~\cite{DBLP:journals/tcs/DidimoEL11} (straight-line drawings with right angle crossings), and many more.
We refer to~\cite{DBLP:journals/csur/DidimoLM19,DBLP:books/sp/20/HT2020} for recent surveys on beyond-planar graphs.

\paragraph{Contribution.}
A fan-planar drawing that is not simple may contain configuration~S1.
Configuration~S2 is allowed in a partial sense:
two edges may cross any number of times, but only if orienting them arbitrarily results in a consistent orientation of their crossings; cf. Figures~\ref{fig:intro-examples}(c) and~(d).
Recall that every simple fan-planar drawing is $3$-quasiplanar.
In contrast, Figure~\ref{fig:non_simple_fan_planar}(a) depicts a non-simple fan-planar drawing that is not $3$-quasiplanar, which suggests that graphs admitting non-simple fan-planar drawings are not necessarily $3$-quasiplanar.
Consequently, the density bound of $6.5n-20$~\cite{DBLP:journals/jct/AckermanT07} for 3-quasiplanar graphs does not directly carry over.
However, the depicted graph is just a~$K_3$, which can obviously be redrawn as a simple (fan-)planar drawing.
This raises two very natural questions:

\begin{enumerate}
\item Is the largest number of edges in a $n$-vertex non-simple fan-planar drawing larger than the number of edges in any $n$-vertex simple fan-planar drawing?
\item Which non-simple fan-planar drawings can be redrawn as simple fan-planar drawings of the same graph?
\end{enumerate}

\begin{figure}[ht]
  \centering
  \includegraphics[scale=1,page=7]
  {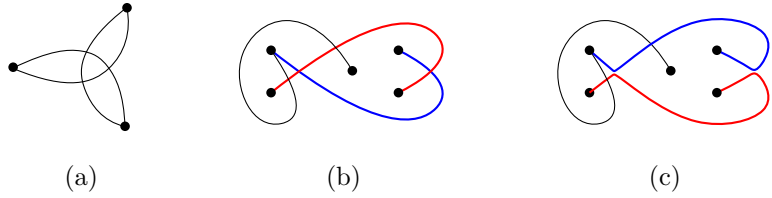}
  \caption{(a) A $3$-quasiplanar non-simple fan-planar drawing. (b) A non-simple fan-planar drawing. Applying the standard procedure for simplifying configuration~S2 yields the drawing in (c), which is not fan-planar since the black edge crosses two independent edges.}
  \label{fig:non_simple_fan_planar}
\end{figure}

Question~1 is also mentioned as an open problem by Kaufmann and Ueckerdt~\cite{KU_dfpg_2014}.
Regarding question~2, we remark that the standard method for simplifying the configurations~S1 and S2 does not necessarily maintain fan-planarity, see Figures~\ref{fig:non_simple_fan_planar}(b) and~(c).
As our main result, we answer both questions, thereby solving the open problem by Kaufmann and Ueckerdt:

\begin{theorem} \label{thm:main}
Every non-simple fan-planar drawing can be redrawn as a simple fan-planar drawing of the same graph without introducing additional crossings.
\end{theorem}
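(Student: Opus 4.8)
\emph{Proof plan.} The plan is to repeatedly perform local redrawings, each preserving fan-planarity and strictly decreasing the total number of crossings, until the drawing becomes simple; since the number of crossings is a nonnegative integer that never increases and strictly decreases at every step, the process terminates, and its output is a simple fan-planar drawing of the same graph with no new crossings.

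First I would reduce to two local situations. Recalling that self-crossings and the remaining classical drawing obstructions are forbidden for \emph{all} drawings here, a fan-planar drawing fails to be simple exactly when two adjacent edges cross (configuration~S1, possibly several times) or two independent edges cross at least twice; in the latter case fan-planarity forces the crossings to be consistently oriented. It therefore suffices to eliminate each of these, so I would also record the basic structural facts I will use: if an edge $a$ with special vertex $A$ is crossed by an edge $b$, then $b$ is incident to $A$, and $a$ is incident to the special vertex $B$ of $b$; hence when independent edges $a,b$ cross we may write $a=BC$ and $b=AB'$ with $A,B,B',C$ pairwise distinct, and every edge crossing $a$ lies in the fan of $a$ at $A$.

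Next I would treat the multi-crossing case. Using the consistency of the crossing orientations, I would show that two edges $a,b$ crossing at least twice admit an \emph{empty lens}: a bigon $D$ bounded by a sub-arc $\alpha\subseteq a$ that is $b$-free and a sub-arc $\beta\subseteq b$ that is $a$-free. The natural move is to reroute one of $\alpha,\beta$ to run closely alongside the other, deleting the two corner crossings of $D$. This, however, can destroy fan-planarity (see Figure~\ref{fig:non_simple_fan_planar}) by making the rerouted edge cross two independent edges (SF1), or creating SF2, a new S1, or new multiple crossings. The remedy is twofold: choose $D$ inclusion-minimal among all lenses of all multi-crossing pairs, which strongly restricts the edges meeting $D$ — by the structural facts every such edge is in the fan of $a$ at $A$, or in the fan of $b$ at $B$, or incident to one of $A,B,B',C$, and minimality forbids an edge entering and leaving $D$ through the same side; and, when the reroute of $a$ would still create a violation because of an edge $e$ crossing $\alpha$ or $\beta$, additionally reroute $e$, possibly triggering a further reroute, and so on. I would capture this as a ``sequence of edges'' lemma: the cascade is a finite sequence $a=e_0,e_1,\dots,e_m$, each $e_i$ rerouted along an arc dictated by $e_{i-1}$ and confined to a neighborhood of $D$, whose successive application produces a fan-planar drawing with strictly fewer crossings. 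Finiteness and correctness follow because all $e_i$ belong to the fan of $a$ or of $b$, these fans only lose members during the cascade, and staying inside a neighborhood of the minimal lens prevents any new forbidden configuration from being created. Configuration~S1 is handled by the same machinery, with the shared endpoint $u$ of the two crossing edges playing the role of a (degenerate) corner of the bigon and the edges inside again confined to the two fans and to edges incident to $u$.

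Applying such a move whenever the drawing is not simple and using the total number of crossings as the potential yields termination and the no-new-crossings guarantee, proving Theorem~\ref{thm:main}. The hard part is the step just described: showing that the reroute cascade always terminates, that the composite move is crossing-non-increasing overall (indeed strictly decreasing), and, above all, that the redrawn pieces together with the rest of the drawing contain none of SF1, SF2, a new S1, or a new bigon. This requires a delicate analysis of how the fans of the edges involved interlock inside the minimal lens; it is exactly this kind of control over rerouting that the erroneous Lemma~1 of~\cite{KU_dfpg_2014} failed to provide, so I expect the bulk — and the subtlety — of the argument to lie here.
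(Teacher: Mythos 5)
Your high-level plan matches the paper's: use the crossing count as a potential, repeatedly apply local redrawings that preserve fan-planarity and strictly decrease the crossing count, and conclude by termination. You also correctly identify the two obstructions to simplicity and the reason the naive redrawing fails (Figure~\ref{fig:non_simple_fan_planar}). Beyond that point, however, the core of your plan diverges from the paper and contains a genuine gap.

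The first problem is the local move itself. For the multi-crossing case (Lemma~\ref{lem:multi_crossings}), the paper does \emph{not} reroute one side of a lens along the other (the move your Figure~\ref{fig:non_simple_fan_planar} shows failing). Instead it picks a very specific pair of crossings $x,y$ and reroutes $g$ along $b$ from $y$ to $z$ — where $z$ is the \emph{next} crossing of $b$ towards $G$, which necessarily belongs to an edge $p$ in the same fan at $B$ — and then along $p$ all the way to $B$. Ending at $B$ alongside $p$ is exactly what lets Propositions~\ref{prop:z} and~\ref{prop:multifan} establish an injective map from new crossings to old ones and preserve fan-planarity with a \emph{single} reroute; no cascade is required. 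Your ``minimal lens plus reroute the other side'' move is a different move, and the remedy you propose for its failures (a cascade) is where the difficulty is, not where it is resolved.

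The second problem is the cascade itself. You posit a sequence $a=e_0,e_1,\dots,e_m$ in which each $e_i$ is actually rerouted along $e_{i-1}$, and assert finiteness because ``these fans only lose members during the cascade.'' This is not an argument: rerouting an edge does not remove it from a fan, and nothing you say rules out the cascade cycling, escaping the neighborhood of $D$, or creating a new lens. The paper never reroutes a cascade of edges. What plays the role of your $e_i$ is the sequence $r_0,b_1,r_2,\dots,r_k$ in Lemma~\ref{lem:sequence_of_edges}, but this is a \emph{structural analysis}, not a sequence of redrawings: the edges are shown to be pairwise distinct (so the sequence is finite by a counting argument), to alternate in incidence to $R$ and $B$ (invariants~\ref{i:R} and~\ref{i:B}), and to bound a strictly shrinking nested family of faces $f_0\supset f_1\supset\dots$ cut into empty triangles (invariants~\ref{i:f} and~\ref{i:tri}), with a concrete termination condition \ref{i:exit} ($r_k$ crosses $b$). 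Only \emph{after} this analysis is a single edge ($b$, rerouted closely along $g$) redrawn, and the nested-face invariants are precisely what show that this lone reroute creates no new crossings. None of these invariants or the exit condition appear in your plan, and they are the crux of the proof.

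Finally, your ordering is inverted relative to the paper in a way that matters. You treat the lens/multi-crossing case as primary and S1 as a ``degenerate corner.'' The paper first disposes of the easy adjacent case (edge incident to its special vertex, Lemma~\ref{lem:adjacent_crossings_case1}) and all multi-crossings (Lemma~\ref{lem:multi_crossings}), and uses the resulting normal form (Corollary~\ref{cor:cor1}: no multi-crossings, no edge incident to its special vertex) as a \emph{precondition} for the hard structural Lemma~\ref{lem:sequence_of_edges}. In fact the adjacent-crossing case where neither edge is incident to its special vertex is by far the hardest part, and framing it as a degenerate bigon obscures exactly the structure (the alternating $r_i/b_i$ sequence and its nested triangles) that the proof depends on. You are right that this is ``where the bulk and subtlety lie,'' but as written your plan has no concrete mechanism there, and the mechanism you sketch (cascading reroutes) is not the one that works.
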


The proof of Theorem~\ref{thm:main} is constructive and gives rise to an efficient algorithm for simplifying a given fan-planar drawing. Combined with the aforementioned previous results regarding the density~\cite{DBLP:journals/jct/AckermanT07,KU_dfpg_2014} and the recognition complexity~\cite{BGDMPST_fppc_2015} of graphs realizable as simple fan-planar drawings, we obtain:

\begin{corollary}
Every (not necessarily simple) fan-planar drawing realizes a $3$-quasiplanar graph.
\end{corollary}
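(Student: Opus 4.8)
The plan is to combine Theorem~\ref{thm:main} with the observation, already recorded in the introduction, that every \emph{simple} fan-planar drawing is $3$-quasiplanar. Let $D$ be an arbitrary fan-planar drawing of a graph $G$. I would first dispose of the easy case where $D$ happens to be simple by showing directly that $D$ contains no three mutually crossing edges. Indeed, suppose edges $a$, $b$, $c$ pairwise cross. Then $b$ and $c$ both cross $a$, so by fan-planarity they must share the special vertex $A$ of $a$; but then the crossing between $b$ and $c$ is a crossing of two adjacent edges, i.e.\ configuration~S1, which a simple drawing cannot contain --- a contradiction. Hence $D$ itself witnesses that $G$ is $3$-quasiplanar.

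For the remaining case, where $D$ is non-simple, I would invoke Theorem~\ref{thm:main} to redraw $D$ as a simple fan-planar drawing $D'$ of the \emph{same} graph $G$. Applying the argument of the previous paragraph to $D'$ shows that $D'$ is $3$-quasiplanar, and since $D'$ is a drawing of $G$, the graph $G$ is $3$-quasiplanar, which is what we wanted.

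I do not expect any genuine obstacle here: the only substantial ingredient is Theorem~\ref{thm:main} itself, while the passage from ``simple fan-planar'' to ``$3$-quasiplanar'' is an immediate consequence of forbidding S1 together with the fan-planarity condition. The one point worth stating carefully is the convention for what ``$3$-quasiplanar'' means for a \emph{graph} as opposed to a \emph{drawing}: we only need that $G$ admits \emph{some} $3$-quasiplanar drawing, and such a drawing is provided by $D$ in the simple case and by $D'$ in the non-simple case.
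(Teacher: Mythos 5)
Your proposal is correct and follows the paper's own route: invoke Theorem~\ref{thm:main} to produce a simple fan-planar drawing of the same graph, then use the fact (noted in the introduction) that a simple fan-planar drawing contains no three pairwise crossing edges, since two edges crossing a common third must share its special vertex and would thus be adjacent, so crossing would violate S1. The only cosmetic difference is that you split off the already-simple case explicitly, which the paper leaves implicit.
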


\begin{corollary}
Every (not necessarily simple) fan-planar drawing on $n$ vertices has at most $6.5n-20$ edges.
\end{corollary}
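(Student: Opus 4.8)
The plan is to obtain this statement purely as a consequence of Theorem~\ref{thm:main}, reducing the non-simple case to the already understood simple case. Let $D$ be an arbitrary (not necessarily simple) fan-planar drawing on $n$ vertices, and let $G$ be the graph it realizes. First I would apply Theorem~\ref{thm:main} to replace $D$ by a simple fan-planar drawing $D'$ of the \emph{same} graph $G$; since a drawing of $G$ has exactly the edges of $G$, the numbers of edges of $D$, of $G$, and of $D'$ all coincide, so it suffices to bound the number of edges of a graph that admits a simple fan-planar drawing.

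For the latter I would invoke the density bound already present in the literature: an $n$-vertex graph with a simple fan-planar drawing has at most $6.5n-20$ edges~\cite{KU_dfpg_2014}. I would either cite this directly or re-derive it in one line from the preceding corollary — a simple fan-planar drawing contains no three pairwise crossing edges, hence realizes a $3$-quasiplanar graph — combined with the bound of Ackerman and Tardos~\cite{DBLP:journals/jct/AckermanT07} stating that $3$-quasiplanar $n$-vertex graphs have at most $6.5n-20$ edges. Applying this to $G$ then yields the claim for $D$.

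I do not expect any genuine obstacle here: all of the difficulty is packed into Theorem~\ref{thm:main}. The only point worth stressing is that this reduction is necessary and not circular — as Figure~\ref{fig:non_simple_fan_planar}(a) shows, a non-simple fan-planar drawing need not itself be $3$-quasiplanar, so the Ackerman--Tardos bound cannot be applied to it directly; Theorem~\ref{thm:main} is precisely what licenses the passage to a simple drawing with the same underlying graph, after which the corollary is immediate and involves no further calculation.
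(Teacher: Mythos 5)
Your proposal is correct and takes essentially the same route as the paper: apply Theorem~\ref{thm:main} to pass to a simple fan-planar drawing of the same graph, then invoke the known $6.5n-20$ bound for simple fan-planar drawings (via $3$-quasiplanarity and Ackerman--Tardos). The paper presents this corollary without a separate proof, treating it exactly as you do -- an immediate consequence of the main theorem combined with the cited density results.
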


\begin{corollary}
Recognizing graphs that admit (not necessarily simple) fan-planar drawings is NP-hard.
\end{corollary}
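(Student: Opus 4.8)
The plan is to deduce the statement from Theorem~\ref{thm:main} together with the known hardness of the simple case. The crucial observation is that Theorem~\ref{thm:main} makes the two relevant graph classes coincide: a graph $G$ admits a (not necessarily simple) fan-planar drawing if and only if it admits a simple fan-planar drawing. The forward implication of this equivalence is trivial, since a simple fan-planar drawing is in particular a fan-planar drawing; the reverse implication is exactly Theorem~\ref{thm:main}, which turns any fan-planar drawing of $G$ into a simple fan-planar drawing of the same graph~$G$.

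Given this equivalence, I would reduce from the problem of recognizing graphs realizable as simple fan-planar drawings, which is NP-hard by Binucci et al.~\cite{BGDMPST_fppc_2015}. The reduction is simply the identity: $G$ is a yes-instance of the simple-fan-planar recognition problem precisely when it is a yes-instance of the (general) fan-planar recognition problem. Since the identity is computable in polynomial time, NP-hardness carries over. This mirrors the derivations of the preceding two corollaries, where the density bound and $3$-quasiplanarity for simple fan-planar drawings likewise lift to arbitrary fan-planar drawings via Theorem~\ref{thm:main}.

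I do not anticipate a genuine obstacle here; all the work is contained in Theorem~\ref{thm:main}. The only points requiring mild care are bookkeeping in nature: confirming that the cited result~\cite{BGDMPST_fppc_2015} indeed concerns realizability of an abstract (simple) graph rather than a fixed-rotation-system variant, and checking that the drawing conventions fixed in Section~\ref{sec:terminology} together with the ``same graph'' formulation in Theorem~\ref{thm:main} leave no gap between the two classes (for instance, regarding isolated vertices or the precise list of globally forbidden configurations). Since all graphs in this paper are simple, these conventions align and the equivalence is clean.
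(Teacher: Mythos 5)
Your proof is correct and follows the same route the paper uses: Theorem~\ref{thm:main} shows the class of graphs admitting fan-planar drawings equals the class admitting simple fan-planar drawings, so the NP-hardness result of Binucci et al.~\cite{BGDMPST_fppc_2015} transfers directly via the identity reduction.
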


We start with some basic terminology and conventions in Section~\ref{sec:terminology}.
The algorithm for simplifying non-simple fan-planar drawings is described in Section~\ref{sec:main}.

\section{Terminology}
\label{sec:terminology}

In all drawings in this paper, edges are represented by simple curves.
We assume no two edges touch, that is, meet tangentially.
Further, we assume that no three edges share a common crossing and that edges do not contain vertices except their endpoints.
Let~$\Gamma$ be a drawing of a graph~$G$.
A \emph{redrawing} of~$\Gamma$ is a drawing of~$G$.
\emph{Redrawing} an edge~$e$ in~$\Gamma$ refers to the process of obtaining a redrawing~$\Gamma'$ of~$\Gamma$ such that $(\Gamma -e)=(\Gamma'-e)$.

In the beginning of Section~\ref{sec:intro}, we introduced the notion of special vertices for crossed edges.
To streamline the arguments, we also assign an arbitrarily chosen \emph{special} vertex to each uncrossed edge.
Let~$e$ and~$f$ be edges that cross and let~$E$ be the special vertex of~$e$.
We define the $i^{th}$ crossing of $f$ with $e$ as the $i^{th}$ crossing between $f$ and $e$ encountered when traversing $f$ from endpoint~$E$.
For example, in Figure~\ref{fig:multi_crossings_fig1}, the first crossing of $g$ with $b$ is $x$ and the second crossing is $y$.

\section{The Redrawing Procedure}
\label{sec:main}

We prove Theorem~\ref{thm:main} by providing an algorithm that redraws the edges of a non-simple fan-planar drawing~$\Gamma$ to obtain a simple fan-planar drawing.
 It is based on three subroutines (Lemmata~\ref{lem:adjacent_crossings_case1}, \ref{lem:multi_crossings} and~\ref{lem:adjacent_crossings_case2}), which can be iteratively applied to remove crossings between adjacent edges (configuration S1)
and
multiple crossings between pairs of edges (configuration S2).
More specifically, the first procedure (Lemma~\ref{lem:adjacent_crossings_case1}) eliminates a particular type of adjacent crossings, namely, those that involve an edge that is incident to its special vertex.
The second procedure (Lemma~\ref{lem:multi_crossings}) removes multiple crossings between edge pairs.
Both procedures reduce the overall number of crossings.
Hence, they can be exhaustively applied to obtain a redrawing~$\Gamma'$ of~$\Gamma$ that does not contain multiple crossings between edge pairs and where adjacent crossings only involve edges that are not incident to their special vertices (Corollary~\ref{cor:cor1}).
The procedure (Lemma~\ref{lem:adjacent_crossings_case2}) for removing these remaining crossings is quite involved and based on a structural analysis (Lemma~\ref{lem:sequence_of_edges}) of the drawing~$\Gamma'$.

The first procedure, for getting rid of some of the adjacent crossings, is very simple; the proof is deferred to Appendix~\ref{app:lem1}, but illustrated in Figure~\ref{fig:adjacent_crossings_case1body}.

\begin{lemma} \label{lem:adjacent_crossings_case1}
Let $\Gamma$ be a non-simple fan-planar drawing. Let $b=(B,R)$ be an edge in $\Gamma$ that is incident to its special vertex $B$. If $b$ has at least one crossing, then one of the edges in the drawing can be redrawn such that the total number of crossings in the drawing decreases.
Moreover, the redrawing is fan-planar.
\end{lemma}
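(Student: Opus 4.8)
The plan is to exploit the fact that $b=(B,R)$ is incident to its own special vertex $B$. Recall that every edge crossing $b$ must have $B$ as an endpoint and must approach $b$ from a common side. Consider the first crossing $x$ on $b$ when traversing $b$ from $B$ towards $R$, and let $c=(B,C)$ be the edge crossed at $x$. The key observation is that the subcurve of $b$ from $B$ to $x$ together with the subcurve of $c$ from $B$ to $x$ bounds a closed region $D$ (a ``lens'' with corners $B$ and $x$), and $B$ is a common endpoint of $b$ and $c$. I would first argue that, because $x$ is the \emph{first} crossing on $b$, the portion of $b$ inside $D$ (namely the subcurve from $B$ to $x$) is crossing-free.

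**The redrawing.**
Next I would reroute $c$: replace the subcurve of $c$ from $B$ to $x$ by a curve that runs alongside (just inside $D$, infinitesimally close to) the subcurve of $b$ from $B$ to $x$, then crosses $b$ transversally near $x$ to rejoin the old route of $c$. Since the $B$-to-$x$ part of $b$ is crossing-free, this new subcurve of $c$ crosses nothing except $b$, and it crosses $b$ exactly once (near $x$, on the same side as before), whereas previously $c$ crossed $b$ at $x$ and possibly also elsewhere inside $D$ — but any other crossing of $c$ with $b$ inside $D$, or any crossing $c$ had with other edges inside $D$, is now removed. In particular the crossing at $x$ is preserved in kind but all crossings strictly inside $D$ on $c$ disappear, so the total number of crossings does not increase; and to get a \emph{strict} decrease I would instead push the new subcurve of $c$ \emph{across} $x$ and slightly past it so that the crossing at $x$ itself is removed as well. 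One then checks that this removes at least the crossing $x$ and introduces none, giving a strict decrease.

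**Maintaining fan-planarity.**
The main obstacle is verifying that the redrawing stays fan-planar. The new subcurve of $c$ lies in a thin neighborhood of a crossing-free arc of $b$, so an edge crosses the new part of $c$ only if it crossed $b$ near that arc — but that arc was crossing-free, so no new crossings on $c$ are created, and crossings that other edges had with the \emph{old} subcurve of $c$ inside $D$ are only removed. Hence for every edge $a \neq c$, its set of crossing edges only shrinks or stays the same, so its special vertex and the side condition are unaffected. For $c$ itself, its multiset of crossings with each other edge only shrinks, and its crossings with $b$ now number zero or one; since $c$ is unchanged outside $D$, the cyclic/consistency condition defining the special vertex of $c$ is inherited from $\Gamma$. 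The one case requiring care is that rerouting $c$ must not create configuration S1 with an edge adjacent to $c$ at $B$: but the new subcurve hugs $b$, which is adjacent to $c$ at $B$, and any edge $a$ adjacent to $c$ at $B$ that the new curve meets would have had to cross the crossing-free $B$-to-$x$ arc of $b$, a contradiction. I would finish by a short case check that $x$ being the first crossing on $b$ rules out the degenerate situations (e.g.\ $c$ re-entering $D$), and conclude that the resulting drawing is fan-planar with strictly fewer crossings. \qed
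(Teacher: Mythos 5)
Your proposal matches the paper's proof: take the crossing $x$ on $b$ nearest $B$, and reroute the crossing edge (called $g$ in the paper, $c$ in your write-up) to run alongside the crossing-free $B$-to-$x$ arc of $b$ and then rejoin its old route past $x$, eliminating the crossing at $x$ while introducing none. Your two-stage phrasing (``first keep a crossing near $x$, then push past it'') is a slightly roundabout way to describe what the paper states directly, and hugging $b$ on the non-$D$ side makes the ``without crossing $b$ at $x$'' step immediate, but the redrawing and the reason fan-planarity is preserved (no new crossings) are the same.
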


\begin{figure}[ht]
  \centering
  \subfigure[]{
	 \label{fig:adjacent_crossings_case1_fig1body}
	\includegraphics[scale=1]
	{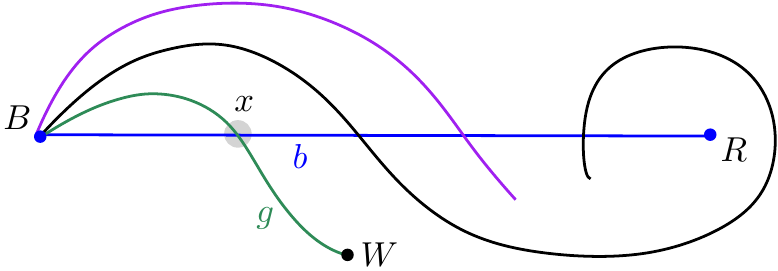}}

  \subfigure[]{
	 \label{fig:adjacent_crossings_case1_fig2body}
	\includegraphics[scale=1]
	{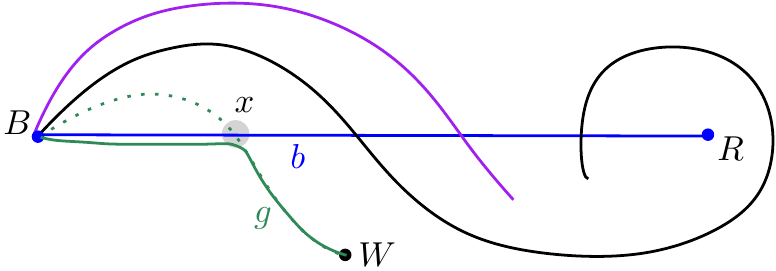}}

    \caption{Illustration of Lemma \ref{lem:adjacent_crossings_case1}. If $b$ is incident to its special vertex $B$, then all crossings on $b$ are adjacent crossings. We redraw the edge $g$ whose crossing $x$ with $b$ is closest to $B$ along $b$. Redrawing the part of $g$ between $x$ and $B$ along~$b$ cannot introduce any new crossings.}
  \label{fig:adjacent_crossings_case1body}
\end{figure}

We continue by describing the second procedure, which eliminates crossings between pairs of edges (independent or adjacent) that cross more than once.

\begin{lemma} \label{lem:multi_crossings} 
Let $\Gamma$ be a non-simple fan-planar drawing. Let $b=(G,R)$ be an edge in $\Gamma$ whose special vertex $B$ is not incident to $b$. 
If edge~$b$ has multiple crossings with at least one other edge, then an edge that crosses~$b$ multiple times, say $g=(B,W)$ (where $W$ could also be incident to $b$), can be redrawn such that at least one crossing between $b$ and $g$ is eliminated and the total number of crossings in the drawing decreases. 
Moreover, the redrawing is fan-planar.
\end{lemma}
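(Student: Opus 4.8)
The plan is to consider an edge $b=(G,R)$ whose special vertex $B$ is not incident to $b$, together with some edge $g=(B,W)$ that crosses $b$ at least twice. Let $x$ be the first crossing of $g$ with $b$ (in the sense defined in Section~\ref{sec:terminology}, i.e., closest to $B$ when traversing $g$ from $B$) and let $y$ be the second. The key idea is that the sub-curve $g_{xy}$ of $g$ between $x$ and $y$, together with the sub-curve $b_{xy}$ of $b$ between $x$ and $y$, bounds a closed disk $D$ in the plane. I would like to reroute one of the two edges, say $g$, along the other side of this bigon: replace $g_{xy}$ by a curve that closely follows $b_{xy}$, thereby removing the two crossings $x$ and $y$ at the cost of the crossings that $b_{xy}$ itself has. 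So the first technical step is to understand what edges cross $b_{xy}$ and what edges cross $g_{xy}$, and to argue that rerouting in the cheaper direction does not increase the total crossing count.

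The decisive structural fact is that $b$ is fan-planar: every edge crossing $b$ has the special vertex $B$ of $b$ as an endpoint, and moreover these crossings are consistently oriented. Since $g$ is incident to $B$, this severely constrains how $g_{xy}$ and $b_{xy}$ interact with other edges. In particular, every edge $h$ that crosses $b_{xy}$ also passes through $B$; I would examine how such an $h$ behaves near the bigon $D$ and argue, using the consistency of the crossing orientations on $b$ and on $g$, that $h$ cannot cross $g_{xy}$ as well unless it contributes a crossing we are removing — the point being that crossings along $b$ near $x$ and near $y$ must be ordered in a way compatible with all these edges emanating from the common vertex $B$. This should show that the new drawing has strictly fewer crossings. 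One has to be slightly careful because $W$ might be incident to $b$, or $g$ might cross $b$ more than twice; but choosing $x,y$ to be the \emph{first two} crossings and rerouting only the piece between them localizes the argument, and any further crossings of $g$ with $b$ are untouched.

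The harder part is verifying that the redrawing is again fan-planar, not merely that it has fewer crossings. Rerouting $g_{xy}$ along $b_{xy}$ means $g$ now crosses exactly the edges that $b_{xy}$ crossed (other than $g$ itself), in the same left-to-right order; I need to check that for every edge $a$ in the new drawing, the edges crossing $a$ still share a special vertex on a common side of $a$. There are two things to control: first, $g$ picks up new crossing edges — these are precisely the edges through $B$ that crossed $b_{xy}$, so $g$'s new crossing partners are adjacent to $g$ at $B$, which would violate S1-type structure unless one argues they were \emph{already} forced to be rerouted or that this situation is handled; actually the cleaner route is to reroute so that $g$ runs just \emph{outside} $D$ alongside $b$, inheriting $b$'s crossings with edges that are \emph{not} at $B$, while the edges at $B$ that cross $b_{xy}$ are shown not to exist by the fan-planarity of $b$ combined with S1 being the only other obstruction — wait, S1 is allowed here since the drawing is non-simple, so this needs genuine care. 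Second, for an edge $a$ that was crossed by $g_{xy}$ in the old drawing, removing that crossing can only help $a$'s fan-planarity. I expect the main obstacle to be exactly this fan-planarity bookkeeping around the vertex $B$: ensuring that transferring $b_{xy}$'s crossing pattern onto $g$ does not create configuration SF1 or SF2 on some third edge, and handling the interaction with whatever edge plays the role of the special vertex witness. I would organize this as a case analysis on whether a crossed third edge has $B$, $G$, or $R$ as an endpoint, using in each case the consistent orientation guaranteed along $b$.
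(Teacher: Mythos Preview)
Your bigon-swap rerouting has a genuine gap that the case analysis you sketch cannot close. Every edge $h$ crossing $b_{xy}$ is incident to $B$, because $B$ is the special vertex of $b$; there are no ``edges not at $B$'' crossing $b$, so the dichotomy you set up is empty on one side. When you reroute $g$ along $b_{xy}$, the new $g$ crosses each such $h$. Now check fan-planarity at $h$: since $b$ crosses $h$, the special vertex of $h$ lies in $\{G,R\}$, and every edge crossing $h$ must be incident to that vertex. But $g=(B,W)$ is incident to neither $G$ nor $R$ unless $W\in\{G,R\}$, which is not assumed. Hence the rerouted $g$ creates configuration~SF1 on $h$. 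This is exactly the failure the paper flags in Figure~\ref{fig:non_simple_fan_planar}(b)--(c): the standard simplification of an S2 configuration need not preserve fan-planarity.

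The paper uses a different rerouting. First it selects $g$ with a minimality property (traverse $b$ from one endpoint and take the edge whose \emph{second} crossing with $b$ is encountered first, its first crossing having already been passed); call those first two crossings $x$ and $y$. Then, rather than replacing $g_{xy}$, it replaces the arc of $g$ from $B$ to $y$ as follows: walk along $b$ from $y$ toward $x$ to the nearest crossing $z$ (so the arc of $b$ from $y$ to $z$ is crossing-free), let $p$ be the edge meeting $b$ at $z$ (necessarily incident to $B$), and route $g$ along the arc of $p$ from $z$ to $B$. The only potentially new crossings lie on this arc of $p$. A topological argument using the closed curve formed by the old arc of $g$ from $B$ to $y$, the arc of $b$ from $y$ to $z$, and the arc of $p$ from $z$ to $B$ then shows that any edge $r$ crossing $p$ there already crossed the replaced arc of $g$, with consistent orientation; this gives an injection from new crossings to removed ones and preserves fan-planarity. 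Routing through the auxiliary edge $p$ is the idea missing from your plan.
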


\begin{figure}[ht]
  \centering
  \subfigure[]{
	 \label{fig:multi_crossings_fig1}
	\includegraphics[scale=1]
	{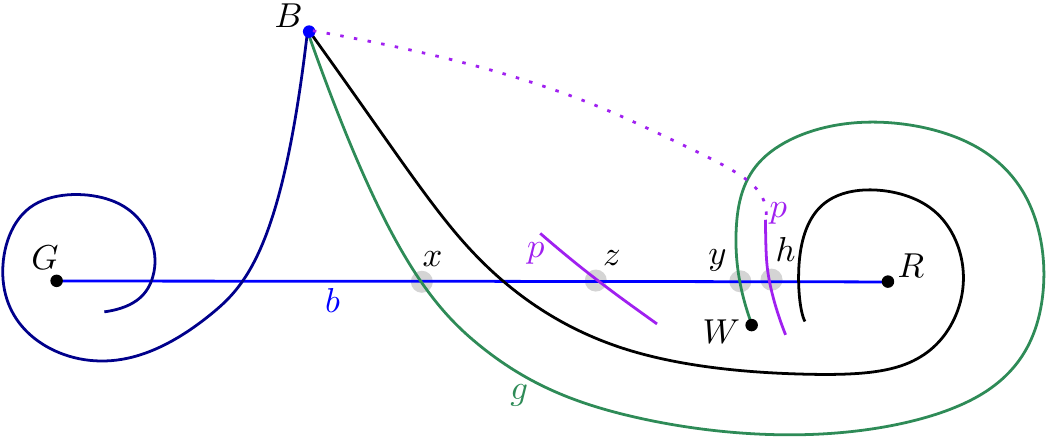}}

  \subfigure[Redrawing of edge $g$.]{
	 \label{fig:multi_crossings_fig2}
	\includegraphics[scale=1]
	{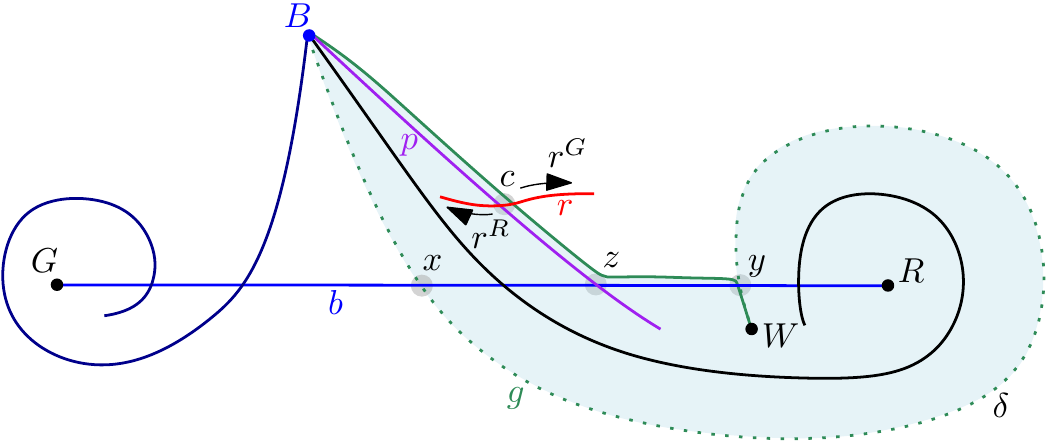}}

    \caption{Edges $b$ and $g$ cross multiple times and the special vertex $B$ of $b$ is not incident to $b$. The figures do not reflect the case when $W$ is incident to~$b$. Any new crossing with the redrawn version of $g$ involves an edge $r$ crossing $p$ between $z$ and $B$, which has to cross the replaced part of $g$ since it is incident to $G$ or $R$.}
  \label{fig:multi_crossings}
\end{figure}

\begin{proof}
We start by describing a procedure to pick the edge that will be redrawn.
We traverse $b$ from $G$ to $R$, until the second crossing of an edge $g = (B, W)$ with~$b$ is encountered such that the first crossing of $g$ with~$b$ appeared before its second crossing, i.e., the second crossing~$y$ with $b$ is closer to $R$ than its first crossing~$x$ with~$b$, see Figure~\ref{fig:multi_crossings_fig1}.
If no such edge exists, we exchange the roles of $R$ and $G$ and repeat the procedure.
We are guaranteed to find an edge~$g$ with the desired properties, since there is an edge crossing~$b$ multiple times.

So without loss of generality, assume that the edge $g$ has its second crossing~$y$ with $b$ closer to $R$ than its first crossing~$x$.
We then walk from $y$ towards $G$ along $b$ until we encounter a crossing $z$ between an edge $p$ and $b$.
The edge $p$ must also be incident to $B$, the special vertex of~$b$.

We can now describe the redrawing procedure.
The edge $g$ is redrawn to follow its previous drawing from $W$ to $y$, cross $b$ at $y$, follow the drawing of $b$ from $y$ to $z$, and, finally, closely follow $p$ from $z$ to $B$; for an illustration see Figure~\ref{fig:multi_crossings}.
The following statements are proved in Appendix~\ref{app:prop12}.

\begin{proposition}
The crossing $z$ is the first crossing of $p$ with~$b$.
\label{prop:z}
\end{proposition}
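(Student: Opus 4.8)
The plan is to argue by contradiction. Suppose $z$ is \emph{not} the first crossing of $p$ with $b$, and let $z'$ denote the first crossing of $p$ with $b$, i.e.\ the one closest to $B$ along $p$. Then $z'\neq z$ and the sub-arc $p[B,z']$ of $p$ between $B$ and $z'$ meets $b$ only in $z'$.

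First I would record the combinatorial data the construction hands us for free. By the choice of $z$, the sub-arc $\beta:=b[z,y]$ contains no crossing in its relative interior; in particular, since the crossing $x$ ($=g_1$) lies on the $G$-side of $y$, either $x=z$ -- in which case $p=g$ and $z=x$ is already the first crossing, so we are done -- or $x$ lies strictly on the $G$-side of $z$. By the choice of $y$ as the \emph{first} crossing along $b$ (from $G$) that is the second crossing of its edge and whose first crossing with $b$ appeared earlier on $b$, no crossing strictly preceding $y$ on $b$ has this property. Since $z$ precedes $y$ on $b$, this forces one of the following: $z$ is the first crossing of $p$ with $b$ (done); or $z$ is not the second crossing of $p$ with $b$; or $z$ is the second crossing of $p$ with $b$ while $z'$ lies on the $R$-side of $z$ on $b$ -- and then, as $\beta$ is crossing-free, $z'$ in fact lies on the $R$-side of $y$. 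In the case that $z$ is not the second crossing of $p$ with $b$, I would additionally bring the second crossing $p_2$ of $p$ with $b$ into play, whose position relative to $y$ is again constrained by the minimality of $y$ and by $\beta$ being crossing-free; this should eliminate most sub-cases outright.

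For the surviving configurations I would build a Jordan curve and derive a contradiction. The natural candidate is the simple closed curve obtained by concatenating $\beta$, the sub-arc of $g$ from $y$ to $B$ (which crosses $b$ only at $x\notin\beta$), and a $b$-crossing-free last hop of $p$ ending at $z$ -- or, in the delicate case, the $b$-crossing-free arc $p[B,z']$ together with the sub-arc of $b$ between $z'$ and $z$. Since $g$ crosses $b$ at $y$, a point in the relative interior of $\beta$ and hence of the closed curve, the edge $g$ has points on both sides of the curve near $y$. Tracing $g$ from $y$ towards $W$ and towards $B$ and invoking the common-side property of $B$ -- every edge crossing $b$, oriented towards $B$, crosses $b$ in the same rotational sense, which pins down on which side of the curve the crossing $x=g_1$ and the crossings of $p$ with $b$ lie -- should force one of three conclusions: an edge is made to cross $b$ inside the crossing-free stretch $\beta$, contradicting the choice of $z$; a crossing preceding $y$ on $b$ turns out to be a second crossing of its edge with its first crossing appearing earlier on $b$, contradicting the choice of $y$; or the crossings on $b$ violate the common-side property, contradicting fan-planarity.

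The step I expect to be the main obstacle is that $g$ and $p$ are adjacent -- both incident to $B$ -- and a crossing of two adjacent edges is permitted in a non-simple fan-planar drawing, so the closed curve above need not be simple. I would treat this separately, using that if $g$ and $p$ cross, then fan-planarity of $g$ forces $p$ to be incident to an endpoint of $b$ and fan-planarity of $p$ forces $g$ to be incident to an endpoint of $b$; these are very restrictive configurations that can be analysed directly. A secondary nuisance, that $p$ or $g$ may cross $b$ many times, I would neutralise by always working with the last $b$-crossing-free hop of the relevant edge, so that the arcs assembled into the Jordan curve are genuinely free of crossings with $b$.
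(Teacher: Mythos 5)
The proposal is in the right spirit -- a Jordan-curve argument leveraging fan-planarity, the crossing-free stretch $b[z,y]$, and the minimality of $y$ -- but it has genuine gaps, and the paper's choice of Jordan curve sidesteps exactly the obstacle you flag as the main one. Your candidate curves do not obviously close up: the ``$b$-crossing-free last hop of $p$ ending at $z$'' starts at the crossing of $p$ with $b$ immediately preceding $z$, \emph{not} at $B$, precisely because you are assuming $z$ is not the first crossing; so the concatenation $\beta \cup g[y,B] \cup (\text{last hop of }p)$ is not a closed curve. In the ``delicate case'' candidate, $p[B,z']$ together with $b[z',z]$ is an open arc, and it is unclear how $z$ and a closing piece fit in. Moreover, you correctly anticipate that any curve containing an arc of $p$ may fail to be simple because $p$ and $g$ are adjacent and may cross each other -- but you only promise to ``treat this separately,'' without carrying it out. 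The paper avoids this entirely by taking $\delta := b[x,y] \cup g[x,y]$, which involves only $b$ and $g$ and is provably simple: $x,y$ are the first and second crossings of $g$ with $b$, and by the monotonicity of consecutive crossings in fan-planar drawings, $g[x,y]$ meets $b$ only at $x,y$, and $b[x,y]$ meets $g$ only at $x,y$.

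That monotonicity observation (if $e$ crosses $f$ at $x_i,x_j,x_k$ with $i<j<k$ in the special-vertex order, then they appear in that order along $f$ as well) is the second key ingredient you do not identify. It is what forces, when $z$ is the $i$-th crossing of $p$ with $b$ for some $i\ge 2$, that the \emph{first} crossing $h$ of $p$ with $b$ lies on the $R$-side of $y$: otherwise the first and second crossings of $p$ would both precede $y$ along $b$ (in $G$-to-$R$ order), contradicting the choice of $g$. Your case split into ``$z$ is the second crossing'' vs.\ ``$z$ is not the second crossing'' leaves the second branch essentially open (``this should eliminate most sub-cases outright''), whereas the monotonicity observation handles all $i \ge 2$ uniformly. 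With $h$ pinned between $y$ and $R$, the paper then argues that $p$, starting at $B$ outside the disk bounded by $\delta$, must cross $g[x,y]$ to enter the disk before reaching $h$, and then must cross $\delta$ a second time to reach $z \in b[x,y]$ with a consistent crossing orientation -- which forces an inconsistently oriented double crossing of $p$ with $g$ or with $b$. You would need to supply both the monotonicity lemma and a genuinely simple closed curve (or a substitute argument) before the plan becomes a proof.
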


\begin{proposition}
Redrawing $g$ maintains fan-planarity. Moreover, there is an injective mapping that assigns each crossing on the redrawn part of $g$ to a crossing on the replaced part of $g$ that involves the same edges.
\label{prop:multifan}
\end{proposition}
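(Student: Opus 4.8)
The plan is to cut the redrawn edge $g$ into three arcs and to track every crossing on it. Write $\sigma := g|_{y\to B}$ for the \emph{replaced} arc; since $y$ is the second crossing of $g$ with $b$, this arc meets $b$ only at $x$, and by the choice of $z$ the arc $b|_{z\to y}$ carries no crossing at all. The new route consists of (i) the untouched piece $g|_{W\to y}$, (ii) a piece that hugs $b$ from $y$ to $z$, and (iii) a piece that hugs $p$ from $z$ to $B$. Piece (i) creates nothing new. For piece (ii), I would first record that $B$, being $b$'s special vertex and not incident to $b$, lies strictly on one side of $b$, that $g$ returns to that side immediately after $y$, and that $p$ leaves $z$ into that same side on its way to $B$; this makes it possible to route the hugging arc so that the corners at $y$ and at $z$ produce no crossing with $b$ or with $p$. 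Together with the fact that $b|_{z\to y}$ is crossing-free, this shows that \emph{every} new crossing lies on piece (iii) and is a crossing of $g$ with some edge $r$ that crosses $p$ strictly between $z$ and $B$.

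The core of the argument is to show that each such $r$ is compelled to cross $\sigma$ at least as often as it crosses $p|_{z\to B}$. Because $b$ itself crosses $p$ (at $z$), fan-planarity applied to $p$ forces the special vertex $P^{*}$ of $p$ to be an endpoint of $b$, so $P^{*} \in \{G,R\}$, and every edge crossing $p$ --- in particular every such $r$ --- is incident to $P^{*}$. Now form the closed curve $\gamma := p|_{z\to B} \cup b|_{z\to y} \cup \sigma$. By Proposition~\ref{prop:z} the arc $p|_{z\to B}$ meets $b$ only at $z$, and $\sigma$ meets $b$ only at $x \notin b|_{z\to y}$; hence, aside from the possibility that $g$ and $p$ cross on these sub-arcs, $\gamma$ is a Jordan curve bounding a disc $D$. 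An edge $r$ entering $D$ through $p|_{z\to B}$ cannot leave through the crossing-free arc $b|_{z\to y}$, and since $r$ is incident to $P^{*}$, which I would show lies outside $\overline{D}$, it must leave $D$ through $\sigma$; matching up the successive passages of $r$ through $D$ yields, for each $r$, an injection from its crossings on piece (iii) to its crossings on $\sigma$, all involving the pair $\{g,r\}$, and the union over all $r$ is the asserted injective mapping. In particular, no pair of edges crosses in the redrawing that did not already cross in $\Gamma$, and $g$ crosses every edge at most as often as before; hence no edge acquires a new crosser, and a short check --- using that each relevant $r$ crosses $p$, whose crossings are consistently oriented, and hence has special vertex $B$ --- shows that the consistent-orientation condition is still met by $g$ and by every other edge, so the redrawing is fan-planar.

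The points that need genuine care are three. First, the side-of-$b$ bookkeeping at the corners $y$ and $z$ in piece (ii), which rests on identifying correctly which side of $b$ each incident arc lies on. Second, the possibility that $g$ and $p$ cross, which makes $\gamma$ self-intersecting and must be handled by ruling such crossings out on the relevant sub-arcs or by decomposing $\gamma$ into Jordan sub-curves. Third, and most importantly, the claim that $P^{*}$ lies outside $D$ --- equivalently, that $r$ is forced to exit $D$ across $\sigma$ rather than terminating inside $D$ or re-crossing $p|_{z\to B}$; this is where Proposition~\ref{prop:z} and the precise choice of $z$ (the first crossing met when walking from $y$ towards $G$, so that both $x$ and $z$ lie between $G$ and $y$ on $b$) are exactly what is needed. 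I expect this last, local orientation analysis to be the main obstacle; the remainder is routine Jordan-curve topology and a direct verification of the fan-planarity conditions.
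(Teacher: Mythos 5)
Your approach is essentially the paper's: the closed curve you call~$\gamma$ is exactly the curve $\phi$ used in the paper (the arc of the old~$g$ from $B$ to~$y$, the arc of~$b$ from $y$ to~$z$, and the arc of~$p$ from~$z$ to~$B$), the observation that the special vertex~$P^*$ of~$p$ is an endpoint of~$b$ and that every crosser~$r$ of~$p$ is incident to~$P^*$ is the same, and the overall plan (force~$r$ to cross the replaced arc~$\sigma$ via a Jordan-curve argument) matches the paper. You also correctly identify that the case $p=g$ should be dispatched separately, which the paper does at the outset.

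The real gap, which you flag but do not close, is the claim that $P^*$ lies outside~$\overline{D}$. As stated this is not the right invariant and does not hold in general. What the paper actually establishes is a conjunction of two facts. First, using that~$y$ is the \emph{second} crossing of~$g$ with~$b$ (so closer to~$R$) together with fan-planarity at~$b$, the vertices $R$ and $G$ lie on \emph{opposite} sides of~$\phi$ --- a fact your write-up never mentions, although it is the engine of the argument. Second, the paper splits~$r$ at~$c$ into the branch~$r^R$ entering the $R$-side of~$\phi$ and the branch~$r^G$ entering the $G$-side, and does a case analysis on $P^*\in\{G,R\}$: fan-planarity at~$p$ (the crossings of~$b$ and~$r$ with~$p$ must present~$P^*$ on the same side of~$p$) forces the branch of~$r$ incident to~$P^*$ to be the one that initially lies on the \emph{opposite} side of~$\phi$ from~$P^*$, and hence to cross~$\phi$; since it cannot cross the arc of~$p$ (that would be an SF2 configuration) nor the uncrossed arc of~$b$ between $z$ and $y$, it must cross~$\sigma$. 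In the case $P^*=R$, this vertex is on the \emph{same} side of~$\phi$ as~$R$, so whether it is ``inside'' or ``outside'' your disc~$D$ depends on the case; the quantity you need is not the position of~$P^*$ relative to a fixed~$D$, but the mismatch between the side of~$\phi$ into which the $P^*$-branch of~$r$ departs and the side on which $P^*$ lives. Your ``matching successive passages of~$r$ through~$D$'' idea is a pleasant way to organize the injectivity over multiple crossings and is compatible with the paper's argument (which picks, for each crossing~$c$, the nearest crossing~$s$ of that branch with~$\phi$), but without the orientation analysis above the matching has nothing to match to. Finally, the remark near the end that the relevant~$r$ ``has special vertex~$B$'' is off --- what one needs (and what the argument delivers) is that~$r$ is incident to~$P^*\in\{G,R\}$ and that the new crossing $r$--$g$ has the same orientation as the crossing $r$--$\sigma$ it replaces.
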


The described redrawing of $g$ eliminates the crossing between $g$ and $b$ at $x$.
Moreover, by Proposition~\ref{prop:z}, the redrawn version of~$g$ does not contain any new crossings between $b$ and $g$.
Combined with Proposition~\ref{prop:multifan}, it follows that the total number of crossings decreases.
Moreover, fan-planarity is maintained.
\hfill$\square$
\end{proof}

Equipped with Lemmata~\ref{lem:adjacent_crossings_case1} and \ref{lem:multi_crossings}, we can apply the following normalization to the drawing (the proof can be found in Appendix~\ref{app:cor1}):

\begin{corollary} \label{cor:cor1}
Let $\Gamma$ be a non-simple fan-planar drawing. There is a fan-planar redrawing $\Gamma'$ of $\Gamma$ such that
\begin{itemize}
\item no two edges cross more than once in $\Gamma'$;
\item no edge is incident to its special vertex; and
\item $\Gamma'$ does not have more crossings than $\Gamma$.
\end{itemize}
\end{corollary}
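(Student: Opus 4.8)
The plan is to obtain $\Gamma'$ by exhaustively applying the two subroutines of Lemmata~\ref{lem:adjacent_crossings_case1} and~\ref{lem:multi_crossings}, using the fact that both procedures strictly decrease the total number of crossings to guarantee termination. First I would set up a potential-function argument: let $c(\Gamma)$ denote the number of crossings in a drawing~$\Gamma$. As long as the current drawing violates one of the first two bullet points, I claim one of the two lemmata applies. Indeed, if some edge is incident to its special vertex and has at least one crossing, Lemma~\ref{lem:adjacent_crossings_case1} applies and reduces $c(\cdot)$ by at least one; if some edge is incident to its special vertex but is uncrossed, then (since special vertices of uncrossed edges are chosen arbitrarily, per Section~\ref{sec:terminology}) we may simply reassign its special vertex to the other endpoint, which changes nothing about the drawing — so after this cosmetic fix, no \emph{uncrossed} edge needs attention. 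If instead two edges cross more than once, then pick such a pair; at least one of the two edges, say~$b$, is not incident to its special vertex (a subtlety I address below), and Lemma~\ref{lem:multi_crossings} applies to~$b$, again reducing $c(\cdot)$ by at least one. Each application preserves fan-planarity by the respective lemma, and each strictly decreases the nonnegative integer $c(\cdot)$, so the process terminates after finitely many steps with a drawing~$\Gamma'$ satisfying all three bullet points and with $c(\Gamma') \le c(\Gamma)$.

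The step I expect to be the main obstacle is verifying that the two subroutines do not interfere destructively — i.e., that fixing one violation does not reintroduce another in a way that prevents termination. Since the only global measure I track is $c(\cdot)$, and \emph{both} subroutines strictly decrease it, the monovariant argument sidesteps this entirely: it does not matter whether, say, eliminating a multiple crossing creates a new edge incident to its special vertex, because the next iteration will handle whichever violation remains, and the crossing count can only go down. The one genuine case-check is the claim above that whenever two edges cross multiply, one of them is not incident to its special vertex: if $b$ crosses $g$ at least twice and $b$ is incident to its special vertex~$B$, then by Lemma~\ref{lem:adjacent_crossings_case1} (applicable since $b$ is crossed) we are in the first regime rather than the second, so we would have applied that lemma first. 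Thus one can impose a priority — always apply Lemma~\ref{lem:adjacent_crossings_case1} when possible, and Lemma~\ref{lem:multi_crossings} only when no edge is incident to its special vertex and crossed — and under this priority, when Lemma~\ref{lem:multi_crossings} is invoked, the chosen edge~$b$ is indeed not incident to its special vertex, matching that lemma's hypothesis.

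Finally I would remark that the construction is effective: the drawing has finitely many crossings, each subroutine is a polynomial-time local redrawing, and each removes at least one crossing, so the whole normalization runs in polynomial time — which is consistent with the algorithmic claim following Theorem~\ref{thm:main}. The full details (the trivial reassignment for uncrossed edges, the priority order, and the termination bound $c(\Gamma)$) are routine and can be relegated to Appendix~\ref{app:cor1}.
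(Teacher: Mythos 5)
Your proposal is correct and follows essentially the same route as the paper's proof: apply Lemmata~\ref{lem:adjacent_crossings_case1} and~\ref{lem:multi_crossings} exhaustively (termination via the crossing count), then reassign the special vertex of any uncrossed edge that happens to have its special vertex as an endpoint. The priority order you impose is a slightly more explicit way to guarantee that Lemma~\ref{lem:multi_crossings}'s hypothesis is met when invoked; the paper obtains the same conclusion by observing that after exhaustion of Lemma~\ref{lem:adjacent_crossings_case1} any remaining edge incident to its special vertex must be uncrossed.
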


Adjacent crossings between edges that are not incident to their special vertices may lead to configurations where the previous edge-rerouting strategies would incur additional crossings.
In the following lemma, we deal with some unproblematic cases and characterize the remaining, more challenging, configurations in terms of a sequence of conflicting edges.

\begin{lemma} \label{lem:sequence_of_edges}
Let $\Gamma$ be a non-simple fan-planar drawing in which no two edges cross more than once and such that no edge is incident to its special vertex.
Let $b=(G,R)$ and $g = (R, B)$ be (adjacent) edges which cross each other at $x$.

We can redraw~$g$ such that the total number of crossings decreases and fan-planarity is maintained; or, alternatively, we can determine a sequence of edges $r_0,b_1, r_2, b_3, r_4, \dots, r_k$  such that the edges $b,g,r_0,b_1, r_2, b_3, r_4, \dots, r_k$ are pairwise distinct and the following properties are satisfied (we call the edges $r_i$``red'' and the edges $b_i$``black''; for an illustration, see Figure~\ref{fig:adjacent_crossings_case2_descr}, as well as Figure~\ref{fig:reroute_b_fig}, which also depicts $r_k$):

\begin{enumerate}[leftmargin=*,label={I\arabic*}]
\item\label{i:B} $B$ is the special vertex of the black edges and incident to the red edges.
\item\label{i:R} $R$ is the special vertex of the red edges and incident to the black edges.
\item\label{i:un} For any odd $i$, the first crossing $x_{i+1}$ of $b_i$ starting from $R$ is with $r_{i+1}$.
 For any even $i<k$, the first crossing $x_{i+1}$ of $r_i$ starting from $B$ is with $b_{i+1}$.
\item\label{i:exit} $r_0$ crosses $b_1$ but no other black edge. $b$ crosses $r_0$ and $r_k$ but no other red edges. 

\item\label{i:f} For the purposes of the final two invariants, we define $q_{-1}=b$. For $0\leq i < k$, let $\alpha_i$ be the closed curve defined by $g$, the arc of $q_i$ and the arc of $q_{i-1}$, where $q \in \{r, b\}$, that connect $R,B$ and $x_i$.
For $0\le i<k$, let $\Gamma_i$ be the drawing induced by the edges $b$, $g, r_0,b_1, r_2, \dots, q_i$.

For $0\le i<k$, the curve $\alpha_i$ is simple and bounds a region $f_i$ that contains only $G$, an arc of~$b$ that connects~$G$ to~$x\in\alpha_i$ and, possibly, an arc of~$r_0$ that connects~$G$ to~$\alpha_i$, in its interior.
\item \label{i:tri} For $0<i < k$, $f_{i} \subset f_{i-1}$ and $f_{i-1} \setminus f_{i}$ is an empty triangular face in $\Gamma_{i}$ bounded by the following three arcs:
	\begin{itemize}
		\item the arc of $q_{i}$ between $x_{i}$ and the special vertex of $q_{i-1}$,
		\item the arc of $q_{i-1}$ between $x_{i}$ and $x_{i-1}$,
		\item the arc of $q_{i-2}$ between $x_{i-1}$ and the special vertex of $q_{i-1}$
	\end{itemize}
where $q \in \{r, b\}$.
\end{enumerate}
\end{lemma}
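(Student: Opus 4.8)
The proof will be an inductive construction of the edge sequence $r_0, b_1, r_2, \dots$, where at each step we either successfully redraw $g$ (and are done), or we extract the next edge in the sequence and verify that invariants~\ref{i:B}--\ref{i:tri} continue to hold. The plan is to start by analyzing the crossing $x$ between $b=(G,R)$ and $g=(R,B)$. Since the drawing is fan-planar, $b$ and $g$ are each crossed by fans sharing their respective special vertices. The special vertex of $b$ is some $B'$ (not incident to $b$), and the special vertex of $g$ is some $R'$ (not incident to $g$). First I would argue that, modulo the easy redrawing case, we may assume $B'=B$ and $R'=R$: if the fan structure around $x$ does not ``lock'' these identifications, then the arc of $g$ from $R$ to $x$ (or the arc of $b$ from $R$ to $x$) can be rerouted closely along the partner edge to remove the crossing $x$ without creating new crossings — this is the same trick as in Lemma~\ref{lem:adjacent_crossings_case1} and Lemma~\ref{lem:multi_crossings}. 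This sets up invariants~\ref{i:B} and~\ref{i:R} as the standing hypotheses.

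Next, I would define $r_0$ as the first edge crossing $b$ after $x$ when walking from $R$ toward $G$ along $b$ (equivalently, the edge whose crossing with $b$ is ``innermost'' on the $G$-side of $x$); fan-planarity of $b$ forces $r_0$ to be incident to $B$ (it is in the fan of $b$). The closed curve $\alpha_0$ bounded by $g$ and the arcs of $q_{-1}=b$ and $q_0=r_0$ meeting at $x$ is then shown to be simple and to enclose exactly $G$, the arc of $b$ to $G$, and possibly an arc of $r_0$ — this is invariant~\ref{i:f} for $i=0$, and it follows from minimality in the choice of $r_0$ together with the fan-planarity constraints (anything else inside $f_0$ crossing $b$ would contradict minimality; crossing $g$ or $r_0$ would need to respect the fans at $R$ and $B$). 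Now the inductive step: given $q_{i-1}$ (with $f_{i-1}$ established), look at the first crossing $x_i$ of $q_{i-1}$ starting from its non-special endpoint among the edges relevant on the boundary; by fan-planarity the crossing edge $q_i$ is in the appropriate fan (incident to $R$ if $q_{i-1}$ is black, incident to $B$ if red), giving~\ref{i:un}. I would then show that the triangle cut off — bounded by arcs of $q_i$, $q_{i-1}$, $q_{i-2}$ as listed in~\ref{i:tri} — is empty in $\Gamma_i$ (otherwise an interior crossing violates a fan constraint or the minimality defining $q_i$), that $f_i = f_{i-1} \setminus (\text{triangle})$ is still simply bounded with the claimed contents (invariant~\ref{i:f}), and that the process must terminate because $f_i \subsetneq f_{i-1}$ strictly and the drawing is finite; when it terminates we reach $r_k$ with $b$ crossing $r_k$ (invariant~\ref{i:exit}), and distinctness of all listed edges follows from the strict nesting of the $f_i$ together with~\ref{i:un}.

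The main obstacle I anticipate is invariant~\ref{i:tri} — proving that each newly cut-off region is an \emph{empty} triangular face. Emptiness has to rule out three kinds of intrusions: (a) another edge crossing into the triangle, which must be argued away using fan-planarity (the triangle's sides belong to specific fans at $R$ and $B$, so any crossing edge would have to be in one of those fans and then the side-consistency condition of fan-planarity, or the minimality defining $q_i$, is violated), (b) a vertex inside, and (c) the subtle possibility that the arcs overlap or wind in an unexpected way so that the ``triangle'' is not actually a disk. Handling (c) cleanly requires careful bookkeeping of which side of each edge the relevant fan lies on — precisely the ``common side'' part of the fan-planarity definition — and I expect this to be where most of the case analysis (and most of the figures) lives. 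A secondary delicate point is making sure the two parities in~\ref{i:un} are consistently maintained (the roles of $R$ and $B$, and of ``first crossing from $R$'' versus ``first crossing from $B$'', swap at each step), so the induction hypothesis must be stated symmetrically in $q\in\{r,b\}$ from the outset.
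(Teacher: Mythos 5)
Your overall framework---an inductive construction that at each step either succeeds in rerouting $g$ or extracts the next edge, with distinctness of the edges following from the strict nesting $f_i\subsetneq f_{i-1}$---is the same as the paper's, and your list of anticipated difficulties (emptiness of the triangular face, side-consistency, and the $R/B$ parity bookkeeping) is accurate. However, two concrete steps in your plan would fail as stated. First, your choice of $r_0$ is on the wrong side of $x$: you take the edge whose crossing with $b$ is ``innermost on the $G$-side of $x$'', i.e.\ between $x$ and $G$. With that choice the arc of $b$ from $R$ to $x_0$ passes through $x$, which also lies on $g$; the curve $\alpha_0$ would then visit $x$ twice and fail to be simple, contradicting invariant~\ref{i:f}. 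The paper instead takes the first crossing of $b$ encountered from $R$; if that is $x$ itself one reroutes $g$ along $b$ and is done, and otherwise $x_0$ lies strictly between $R$ and $x$, which is what makes $\alpha_0$ simple and puts $G$ inside $f_0$.

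Second, and more substantially, you assert that ``by fan-planarity the crossing edge $q_i$ is in the appropriate fan'', treating the incidence of $q_i$ to $R$ or $B$ as immediate. It is not (and the parity you state is also reversed: a red $q_i$ must be incident to $B$, a black one to $R$). Fan-planarity only yields that $q_i$ and $q_{i-2}$ share the special vertex of $q_{i-1}$, and a priori that shared vertex could be the \emph{other} endpoint $Z$ of $q_{i-2}$ rather than $R$ or $B$. Excluding the $Z$-case is exactly Proposition~\ref{prop:R} (and its Case~2 analogue in Appendix~\ref{app:case2}): one follows the arc of the new edge out of $f_i$ into the triangle $f_{i-1}\setminus f_i$, then into $f_{i-2}\setminus f_{i-1}$, and uses invariants~\ref{i:un} and~\ref{i:tri} plus the common-side condition to show it has no exit, so its endpoint would have to lie inside an empty region---a contradiction unless it is incident to $R$ (resp.\ $B$). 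In particular, the special vertex of $q_j$ is only determined \emph{after} $q_{j+1}$ has been shown incident to the correct vertex, so at the moment you invoke ``the appropriate fan'', that fan is precisely the thing still to be proved. As a minor point, the identities of the special vertices of $b$ and $g$ are forced by the hypothesis that no edge is incident to its special vertex (and the special vertex of $g$ is $G$, not $R$); no preliminary redrawing argument is needed there.
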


\begin{remark} \label{rem:g_crosses_b}
Note that invariant \ref{i:f} implies that in $\Gamma_i$, $g$ crosses only $b$ and possibly~$r_0$.
Moreover, the arcs of $q_i$ and $q_{i-1}$ connecting $R$ and $B$ via $x_i$ are uncrossed in $\Gamma_i$.
\label{rem:tri}
\end{remark}

\begin{figure}[ht]
  \centering
  \includegraphics[scale=1]
  {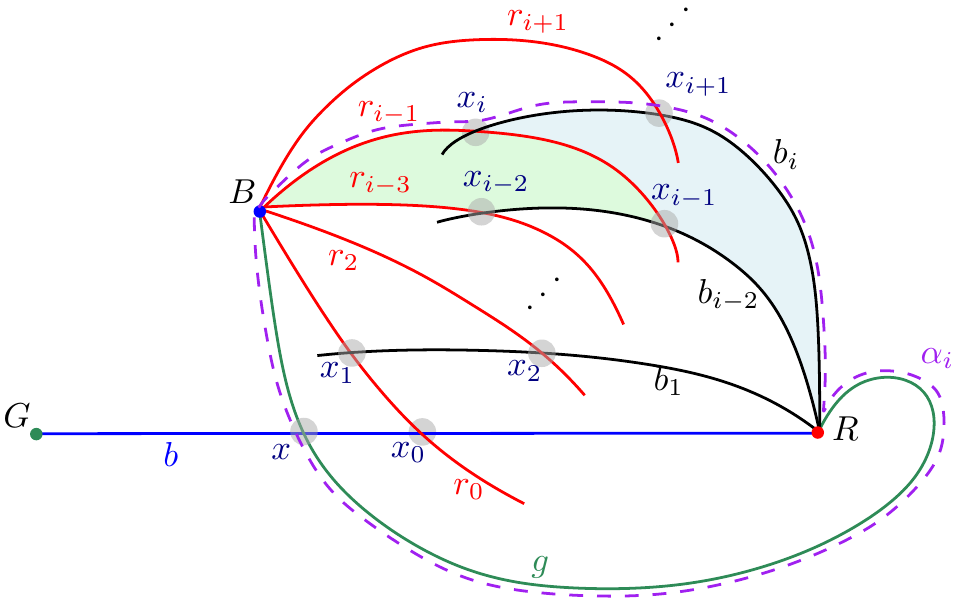}
  \caption{An example of the sequence of edges described in Lemma~\ref{lem:sequence_of_edges}. The face $f_i$ is the unbounded region delimited by the dashed curve, the face $f_{i-1} \setminus f_{i}$ is depicted in blue and the face $f_{i-2} \setminus f_{i-1}$ is depicted in green.}
  \label{fig:adjacent_crossings_case2_descr}
\end{figure}

\begin{proof}
It follows from the preconditions that~$B$ is the special vertex of~$b$ and~$G$ is the special vertex of~$g$. 
We will construct the sequence of edges inductively.

\paragraph{Base case.}
For the induction base case, we show how to determine~$r_0$ and~$b_1$ such that all invariants are satisfied with respect to~$r_0$.
For $b_1$, we will only establish the invariants~\ref{i:R}, \ref{i:exit}, \ref{i:f} and \ref{i:tri}.

We traverse from $R$ along $b$ until we encounter an edge $r_0$ that crosses $b$ and denote its crossing by $x_0$.
If~$x_0=x$ and, hence, $r_0=g$, we can redraw the part of~$g$ that leads from~$R$ to~$x$ along~$b$ such that the crossing at~$x$ is removed.
Moreover, since the redrawn part is crossing-free, the total number of crossings is decreased and fan-planarity is maintained.
Hence, if~$x_0=x$, the statement of the lemma holds.

So assume that~$x_0\neq x$.
It follows that, $r_0\neq g$ since~$g$ cannot cross $b$ multiple times.
Moreover, $r_0\neq b$ since edges are realized as simple curves.
Since~$r_0$ intersects~$b$, it is incident to~$B$.

Now, we traverse~$r_0$ from~$B$ towards~$x_0$ until we encounter a crossing~$x_1$ with an edge~$b_1$.
If~$x_1=x_0$ and, hence, $b_1=b$, we redraw~$g$ along the part of~$b$ between~$R$ and~$x_0$ and the part of~$r_0$ between~$x_0$ and~$B$.
The redrawn version of~$g$ is crossing-free.
Hence, we have eliminated at least one crossing (namely~$x$) while maintaining fan-planarity
and, thus, the statement of the lemma holds if~$x_1=x_0$.

So assume that~$x_1\neq x_0$.
It follows that $b_1$ is distinct from~$b$ since~$b$ has no multiple crossings with~$r_0$.
Moreover, $b_1\neq r_0$ since edges are simple curves.
Finally, we show that~$b_1\neq g$.
In fact, we actually claim something stronger and prove it in Appendix~\ref{app:base}.

\begin{proposition}
\label{prop:base}
The part of~$r_0$ between~$B$ and~$x_0$ cannot cross~$g$.
\end{proposition}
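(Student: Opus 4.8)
The plan is to argue by contradiction: assume the sub-arc of $r_0$ from $B$ to $x_0$ crosses $g$, at a point $y$. First I would fix the combinatorial picture. Since no three edges pass through a common point, $y\ne x$, so $g$ is crossed by the two distinct edges $b$ and $r_0$. As the preconditions already force the special vertex of $g$ to be $G$ and $g$ now has at least two crossings, every edge crossing $g$ — in particular $r_0$ — is incident to $G$; together with the fact (shown earlier in the proof) that $r_0$ is incident to $B$, this gives $r_0=(B,G)$, so $b$, $g$, $r_0$ form a pairwise-crossing triangle on $\{B,G,R\}$. I would also record two ordering facts, both immediate: because $x_0$ is the first crossing met along $b$ starting from $R$, the crossing $x$ lies on the sub-arc of $b$ from $x_0$ to $G$; and, by the contradiction hypothesis together with $y\ne x_0$, the crossing $y$ lies on the sub-arc of $r_0$ from $B$ to $x_0$ and hence not on the sub-arc of $r_0$ from $G$ to $x_0$.

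The core is a Jordan-curve parity obstruction. Since $b$ and $r_0$ share exactly the endpoint $G$ and cross in exactly the one point $x_0$, the concatenation $\Lambda$ of the sub-arc of $b$ from $x_0$ to $G$ with the sub-arc of $r_0$ from $G$ to $x_0$ is a simple closed curve. On one hand, $g$ meets $\Lambda$ only at $x$: it meets $b$ only at $x$, which lies on the $b$-part of $\Lambda$, it meets $r_0$ only at $y$, which does not lie on the $r_0$-part of $\Lambda$, and $G$ is not on $g$. Hence $g$ crosses the Jordan curve $\Lambda$ exactly once, so its endpoints $R$ and $B$ lie on opposite sides of $\Lambda$. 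On the other hand, at the transversal crossing $x_0$ the four edge-germs of $b$ and $r_0$ alternate around $x_0$; the two germs used by $\Lambda$ there are ``$b$ towards $G$'' and ``$r_0$ towards $G$'', which — being one germ of $b$ and one of $r_0$ — are consecutive in the rotation, so the remaining two germs ``$b$ towards $R$'' and ``$r_0$ towards $B$'' lie on one common side of $\Lambda$. The arcs of $b$ from $x_0$ to $R$ and of $r_0$ from $x_0$ to $B$ leave $x_0$ along exactly these germs, and a short check (using again that $b$ and $r_0$ meet only in $\{G,x_0\}$, that $R\notin r_0$, and that $G$ is not an interior point of $b$) shows neither arc returns to $\Lambda$; hence $R$ and $B$ both lie on that common side, contradicting the previous sentence. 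This contradiction proves the proposition.

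I expect the main obstacle to be the two intersection-bookkeeping claims — that $g$ meets $\Lambda$ only at $x$, and that the arcs to $R$ and to $B$ do not re-meet $\Lambda$ — where one must carefully invoke the standing preconditions (two edges cross at most once; edges meet a vertex only when it is an endpoint) to rule out all the ways the relevant sub-arcs of $b$, $g$, $r_0$ could intersect one another. The only other delicate step is the deduction $r_0=(B,G)$, which is where fan-planarity enters through the special vertex of $g$; after that the argument is purely topological, and in particular it uses neither any subdivision into cases according to the position of $x$ on $g$ nor any property of edges outside $\{b,g,r_0\}$.
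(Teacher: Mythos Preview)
Your argument is correct, and it is genuinely different from the paper's. The paper builds its Jordan curve on the other side: it takes $\gamma$ to be the lens formed by the $R$-to-$x$ arcs of $b$ and $g$, observes that $G$ and $B$ lie on the same side of $\gamma$, and then invokes the \emph{orientation-consistency} part of fan-planarity of $b$ (the special vertex $B$ must sit on the same side of $b$ at both crossings $x$ and $x_0$) to force the $G$-arc of $r_0$ to start on the side of $\gamma$ not containing $G$, whence $r_0$ would have to cross $b$ or $g$ a second time. You instead take $\Lambda$ to be the lens formed by the $x_0$-to-$G$ arcs of $b$ and $r_0$ and run a pure parity argument on $g$: $g$ meets $\Lambda$ exactly once, so $R,B$ are on opposite sides, while the local germ picture at $x_0$ together with the no-double-crossing hypothesis puts $R,B$ on the same side. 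After the common step $r_0=(B,G)$ (via the special vertex of $g$), your proof uses only that each pair among $b,g,r_0$ meets at most once and never appeals to the crossing-orientation condition for $b$; in that sense it is slightly more elementary. The paper's route, on the other hand, is a bit shorter to state once the orientation-consistency fact is on the table, and it fits the pattern used elsewhere in the lemma.
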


In particular, Proposition~\ref{prop:base} implies~$b_1\neq g$, as claimed. Thus, we have determined two edges~$r_0$ and~$b_1$ such that~$b,g,r_0,b_1$ are pairwise distinct.
It remains to show that the desired invariants hold.
We have already established that~$r_0$ is incident to~$B$ (since it intersects~$b$) and, thus, \ref{i:B} is satisfied for~$r_0$.

Since~$b_1$ and~$b$ cross~$r_0$, it follows that~$b_1$ shares an endpoint with~$b$, which is the special vertex of~$r_0$.
Accordingly, we consider two cases.
First, assume that the special vertex of $r_0$ is $G$, which is illustrated in Figure~\ref{fig:adjacent_crossings_base}.
Consider the closed curve~$\alpha_0$ described by~$g$, the part of~$r_0$ between~$x_0$ and~$B$ and the part of~$b$ between~$R$ and~$x_0$.
By Proposition~\ref{prop:base} and the fact that there are no multiple crossings, the curve~$\alpha_0$ is indeed simple.
Orient~$b$ and~$b_1$ towards~$G$.
Since the resulting orientation of the crossings~$x_0$ and~$x_1$ has to be consistent, it follows that the part of~$b_1$ that connects~$x_1$ with~$G$ has to intersect~$\alpha_0$.
More specifically, since there are no multiple crossings, it needs to intersect~$g$ in some point~$z$.
We now redraw~$g$ along the part of~$b$ that connects~$R$ with~$x_0$ and the part of~$r_0$ that connects~$x_0$ with~$B$.
The redrawn version of~$g$ only has crossings along the part between~$x_0$ and~$B$.
In particular, it crosses~$b_1$ at~$x_1$, but the orientation of this crossing is consistent with the orientation of~$z$ in the original drawing of~$g$.
The same argument applies for all other intersected edges.
Consequently, we introduce no additional crossings, eliminate the crossing~$x$, and maintain fan-planarity.
Hence, the statement of the lemma holds if the special vertex of~$r_0$ is~$G$.
It remains to consider the case where the special vertex of~$r_0$ is~$R$ and, hence, $b_1$ is incident to~$R$.
It follows that invariant~\ref{i:R} is satisfied for both~$r_0$ and~$b_1$.

Invariant~\ref{i:un} for~$r_0$ is satisfied by construction (and for~$b_1$ there is nothing to show). Invariant~\ref{i:exit} is also satisfied for~$r_0$ and~$b_1$ by construction.

The edge~$r_0$ cannot cross~$b$ or~$b_1$ a second time.
If it crosses~$g$, then it is incident to~$G$, the special vertex of~$g$.
In any case, this implies invariant~\ref{i:f} for~$\Gamma_0$.

We observe that~$b_1$ cannot cross~$b$ or~$g$ since this would imply that~$b_1$ is incident to~$B$ or~$G$ (the special vertex of~$b$ and~$g$, respectively) and hence $b_1$ is parallel to~$g$ or~$b$, respectively.
Moreover, $b_1$ cannot cross~$r_0$ a second time.
Hence, the part of~$b_1$ that leads from~$R$ to~$x_1$ is crossing-free in the drawing~$\Gamma_1$.
Together with invariant~\ref{i:f} for~$\Gamma_0$, the invariant~\ref{i:f} holds for~$\Gamma_1$ and invariant~\ref{i:tri} holds, which concludes the base case.

\begin{figure}[ht]
  \centering
  \includegraphics[scale=1]
  {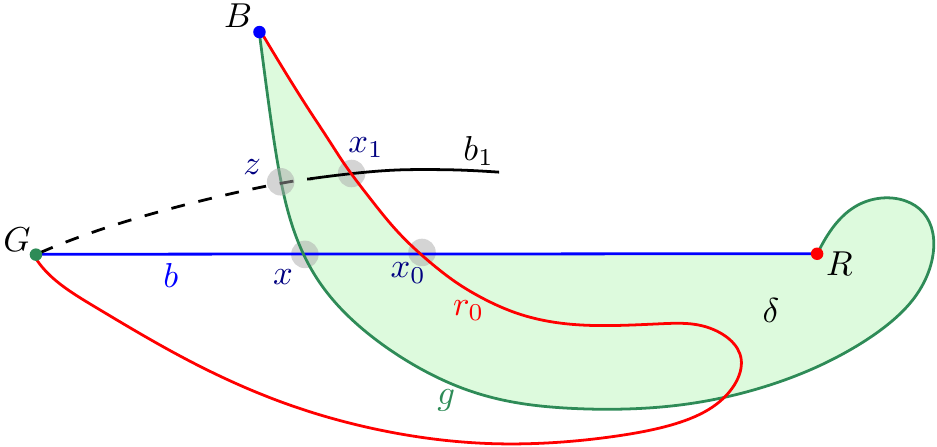}
  \caption{$r_0$ can be incident to $G$. $b_1$ is drawn as if $G$ is the special vertex of $r_0$.}
  \label{fig:adjacent_crossings_base}
\end{figure}

\paragraph{Inductive Step:} Now, assume the first $j+1$ edges, $r_0, b_1, \ldots, q_j$, have been determined and $j<k$.
We assume all invariants hold for $r_0, \dots q_{j-1}$.
Additionally, we assume that~\ref{i:B}, \ref{i:exit}, \ref{i:f} and \ref{i:tri} hold for $q_j$ if $j$ is even (and hence $q_j=r_j$ is red), 
or~\ref{i:R}, \ref{i:exit}, \ref{i:f} and \ref{i:tri} hold for $q_j$ if $j$ is odd (and hence $q_j=b_j$ is black).

We will now determine the edge $q_{j+1}$.
If $j$ is even, we need to prove the invariants~\ref{i:R} and~\ref{i:un} for $q_j$ and the invariants~\ref{i:R}, \ref{i:exit}, \ref{i:f} and \ref{i:tri} for $q_{j+1}$.
If $j$ is odd, we need to prove the invariants~\ref{i:B} and~\ref{i:un} for $q_j$ and the invariants~\ref{i:B}, \ref{i:exit}, \ref{i:f} and \ref{i:tri} (and~\ref{i:R} and~\ref{i:un} if $j+1=k$) for $q_{j+1}$.

\paragraph{Case 1:} $q_j = r_j$. Note that in this case, we have nothing to prove for invariant~\ref{i:B}.

If the edge $r_j$ has no crossings between $B$ and $x_j$, then we could redraw $g$ along this part of $r_j$ and the arc of $b_{j-1}$ from $x_j$ to $R$. The redrawn version of $g$ would then be uncrossed by invariant~\ref{i:un} for $b_{j-1}$ and the lemma is proved.

Otherwise $r_j$ has at least one crossing between $B$ and $x_j$.
 We determine edge $b_{j+1}$ as follows: traverse along $r_j$ from $B$ towards $x_j$ until we encounter the first edge that crosses $r_j$, let this edge be $b_{j+1}$. 

Invariant~\ref{i:un} for $r_j$ is satisfied by construction.
To prove the remaining invariants, we establish several propositions, the proofs of which can be found in Appendix~\ref{app:case1}. First we prove invariant~\ref{i:R} for $r_j$ and $b_{j+1}$.

\begin{proposition}
\label{prop:R}
Edge $b_{j+1}$ is incident to $R$.
\end{proposition}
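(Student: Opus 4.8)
The plan is the following. First I would note that $b_{j+1}$ is new: by construction it crosses $r_j$ at a point $x_{j+1}$ lying on the arc of $r_j$ strictly between $B$ and $x_j$ (we are in the case where this arc carries a crossing, and $x_{j+1}$ is the first one met when traversing $r_j$ from $B$), and this arc is uncrossed in $\Gamma_j$ by Remark~\ref{rem:g_crosses_b}; hence $b_{j+1}\notin\{b,g,r_0,b_1,\dots,q_j\}$, and in particular $x_{j+1}\neq x_j$. Since $r_j$ also crosses $b_{j-1}$ at $x_j$ and no two edges cross more than once, $b_{j-1}$ and $b_{j+1}$ are distinct edges both crossing $r_j$. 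By fan-planarity all edges crossing $r_j$ then share a common endpoint, the special vertex $S$ of $r_j$, and $S$ is a common endpoint of $b_{j-1}$ and $b_{j+1}$. By invariant~\ref{i:R}, $b_{j-1}$ is incident to $R$; writing $W\neq R$ for its other endpoint (with $W\neq B$, since $b_{j-1}$ is not incident to its special vertex $B$), we obtain $S\in\{R,W\}$, so the proposition reduces to excluding $S=W$.

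To exclude $S=W$, I would reuse the rerouting argument from the base case (the subcase there in which the special vertex of $r_0$ turned out to be $G$). Assuming $S=W$, every edge crossing $r_j$ is incident to $W$. Redraw $g$ so that it follows the arc of $b_{j-1}$ from $R$ to $x_j$ and then the arc of $r_j$ from $x_j$ to $B$. Its first part is crossing-free by invariant~\ref{i:un} for $b_{j-1}$, and on its second part it meets exactly the edges that cross the arc of $r_j$ between $x_j$ and $B$. For any such edge $e$, orienting $b_{j-1}$ and $e$ towards their common vertex $W$ and using that the crossings along $r_j$ are consistently oriented forces the part of $e$ leading to $W$ to leave the region bounded by $\alpha_j$; since it can cross neither the crossing-free arc of $b_{j-1}$ on $\alpha_j$ nor $r_j$ a second time, it must cross $g$ in $\Gamma$. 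Thus every crossing of the redrawn $g$ can be charged injectively to a crossing of the original $g$ involving the same edge and with matching orientation, so the redrawing removes the crossing $x$ without introducing a new one and keeps fan-planarity --- i.e.\ the first alternative of Lemma~\ref{lem:sequence_of_edges} already holds. Hence the case $S=W$ does not occur while we are extending the sequence, and therefore $S=R$, i.e., $b_{j+1}$ is incident to $R$.

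I expect the most delicate point to be the charging claim: that, when $S=W$, no edge crosses the arc of $r_j$ between $x_j$ and $B$ without also crossing $g$. I would derive it from the Jordan-curve structure already available for $\Gamma_j$ --- the simplicity of $\alpha_j$ (invariant~\ref{i:f}) and the fact that $f_{j-1}\setminus f_j$ is an empty triangular face (invariant~\ref{i:tri}) locate $W$ and the relevant arcs with respect to $\alpha_j$ and make the orientation-consistency argument on $r_j$ go through --- which is the exact analogue of the correspondingly short step in the base case (there simply read off from Figure~\ref{fig:adjacent_crossings_base}), but harder to picture here. As a cross-check, one may instead trace $b_{j+1}$ forward from $x_{j+1}$ through the empty triangle $f_{j-1}\setminus f_j$: it must exit across the arc of $b_{j-1}$ or the arc of $r_{j-2}$ on the triangle boundary, and in the latter case $b_{j+1}$ crosses $r_{j-2}$, which is incident to $R$ by invariant~\ref{i:R} (since $r_{j-2}$ is already crossed by two distinct edges, namely $b_{j-1}$ and $b$ if $j=2$, or $b_{j-3}$ and $b_{j-1}$ if $j\geq 4$).
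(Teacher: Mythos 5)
The first part of your argument is fine: $b_{j+1}$ is new (its crossing $x_{j+1}$ lies on an arc of $r_j$ that is uncrossed in $\Gamma_j$), and since $b_{j-1}$ and $b_{j+1}$ both cross $r_j$, fan-planarity forces them to share $r_j$'s special vertex $S\in\{R,W\}$, where $W$ is the second endpoint of $b_{j-1}$. The gap is in the attempt to rule out $S=W$ by a rerouting of $g$. The charging claim you flag as ``the most delicate point'' is in fact false, and the orientation analysis goes in the opposite direction from what you assert. At $x_j$, the $W$-ward part of $b_{j-1}$ bounds the empty triangle $f_{j-1}\setminus f_j$, not $f_j$: near $x_j$ the four sectors cut out by $r_j$ and $b_{j-1}$ place $f_j$ between the arcs $r_j[x_j,B]$ and $b_{j-1}[x_j,R]$ and the triangle between $r_j[x_j,B]$ and $b_{j-1}[x_j,x_{j-1}]$. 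Consequently, for any edge $e$ crossing $r_j$ at $x_e$ between $B$ and $x_j$ with $S=W$, the consistent orientation of crossings along $r_j$ forces the $W$-ward part of $e$ onto the \emph{triangle} side at $x_e$, i.e., $e$ exits $f_j$ immediately and never has to meet $\alpha_j$ (in particular, never has to meet $g$). So the injective charging of new crossings of the rerouted $g$ to old crossings on $g$ breaks down, and the ``first alternative of the lemma'' is not secured this way. Note this is not the analogue of the base case: there the special vertex $G$ of $r_0$ lies \emph{inside} $f_0$, which is precisely what forces each $e$ to cross $\alpha_0$; here $W$ lies \emph{outside} $f_j$ (it is not $G$, $R$, or $B$, and $f_j$'s interior contains only $G$ plus arcs of $b$ and possibly $r_0$ by invariant~\ref{i:f}).

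What your ``cross-check'' sketches is actually the paper's route, but it stops too early. The paper assumes $b_{j+1}$ is not incident to $R$, follows the arc $b_{j+1}^{i}$ that leaves $f_j$ at $x_{j+1}$ into the empty triangle $f_{j-1}\setminus f_j$, and observes that this arc cannot exit across $r_j$ (no double crossing) nor across $r_{j-2}[x_{j-1},B]$ (uncrossed by~\ref{i:un}); so it must cross $b_{j-1}$ into the next empty triangle $f_{j-2}\setminus f_{j-1}$. There it is again trapped: it cannot recross $b_{j-1}$, cannot cross $b_{j-3}$ on its uncrossed arc, and cannot cross $r_{j-2}$ (that would force incidence to $R$, which was assumed false). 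Hence $b_{j+1}^{i}$ cannot reach $W$, a contradiction. The second step of this region-tracing is missing from your sketch; without it the argument does not close.
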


Since $r_j$ crosses both edges $b_{j+1}$ and $b_{j-1}$, which are both incident to $R$, the special vertex of $r_{j}$ is $R$, which proves invariant~\ref{i:R}.

Next we prove invariant~\ref{i:exit}. We first need to prove $b_{j+1}$ is distinct from all the previous edges.

\begin{proposition} \label{prop:r0}
 $b_{j+1}$ is distinct from all edges of $\Gamma_j$ and does not cross edge $r_0$.
\end{proposition}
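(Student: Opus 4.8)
The plan is to treat the two assertions of Proposition~\ref{prop:r0} separately; the first is short, and the second is the technical core. For the first assertion (that $b_{j+1}$ is distinct from every edge of $\Gamma_j$), note that we are in the branch of Case~1 in which $r_j$ has a crossing in the interior of the sub-arc of $r_j$ from $B$ to $x_j$ — otherwise, rerouting $g$ along that sub-arc of $r_j$ together with the sub-arc of $b_{j-1}$ from $x_j$ to $R$ already removes a crossing and proves the lemma. Hence $x_{j+1}$, the first such crossing met when traversing $r_j$ from $B$, lies strictly between $B$ and $x_j$, i.e.\ on the arc of $q_j=r_j$ occurring in the definition of $\alpha_j$. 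Since invariant~\ref{i:f} holds for $q_j$ by the inductive hypothesis, Remark~\ref{rem:g_crosses_b} says this arc is uncrossed in $\Gamma_j$; as $b_{j+1}$ crosses $r_j$ at $x_{j+1}$, it therefore cannot be an edge of $\Gamma_j$, so in particular $b_{j+1}\notin\{b,g,r_0,b_1,r_2,\dots,r_j\}$.

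For the second assertion (that $b_{j+1}$ does not cross $r_0$), I would argue by contradiction. The crucial intermediate claim is that the arc $\beta$ of $b_{j+1}$ from its endpoint $R$ to $x_{j+1}$ stays in the closed disk $\overline{f_j}$: indeed $b_{j+1}$ cannot cross $g$ (this would make it incident to $G$, forcing $b_{j+1}=b$, contrary to the first assertion) nor $b_{j-1}$ (two black edges crossing would force $b_{j+1}$ incident to $B$, hence $b_{j+1}=g$, again contrary to the first assertion), while it crosses $r_j$ only at $x_{j+1}$; since $\partial f_j=\alpha_j$ consists of $g$, the arc of $b_{j-1}$ from $R$ to $x_j$, and the arc of $r_j$ from $x_j$ to $B$, the only way $\beta$ could leave $\overline{f_j}$ is across $r_j$ and hence only at $x_{j+1}$ — and were $\beta$ to approach $x_{j+1}$ from outside $f_j$, i.e.\ through the empty triangle $f_{j-1}\setminus f_j$ of invariant~\ref{i:tri}, the remainder of $b_{j+1}$ would be trapped inside $f_j$ with nowhere to exit — so in fact $\beta\subseteq\overline{f_j}$. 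By invariant~\ref{i:f} the interior of $f_j$ contains only $G$, an arc of $b$ from $G$ to a point of $g$, and possibly an arc of $r_0$ from $G$ to a point of $g$; a planarity argument in the disk $\overline{f_j}$, using that $\beta$ cannot cross $b$ (both incident to $R$) and that $b_{j+1}$ and $r_0$ cross at most once, then shows $\beta$ does not cross $r_0$. It remains to rule out a crossing of $r_0$ with the part of $b_{j+1}$ lying outside $\overline{f_j}$: this part leaves $\overline{f_j}$ across $r_j$ at $x_{j+1}$ into the empty triangle $f_{j-1}\setminus f_j$, is forced out of it across $r_{j-2}$ (again by the black/black and ``at most one crossing'' obstructions), and thereafter winds outward through the nested faces $f_{j-2}\supset\dots\supset f_0$; one finishes by checking, region by region and using that $r_0$ meets the edges of $\Gamma_j$ only at $x_0$ on $b$ and $x_1$ on $b_1$ together with invariant~\ref{i:exit}, that $b_{j+1}$ avoids $r_0$ throughout.

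I expect the main obstacle to be this last step: the outward ``spiralling'' of $b_{j+1}$ and the case-by-case verification that it never re-encounters $r_0$. By contrast, the first assertion and the confinement of $\beta$ to $\overline{f_j}$ follow quickly from adjacency/special-vertex considerations and from Remark~\ref{rem:g_crosses_b}; the real care is concentrated in the global bookkeeping over the nested faces, parallel to the structure already encoded in invariants~\ref{i:f} and~\ref{i:tri} and illustrated in Figure~\ref{fig:adjacent_crossings_case2_descr}.
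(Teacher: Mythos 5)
Your argument for the first assertion is valid and in fact more direct than the paper's: $x_{j+1}$ lies strictly between $B$ and $x_j$ on $r_j$, i.e.\ in the interior of the arc of $q_j=r_j$ appearing in $\alpha_j$, which Remark~\ref{rem:g_crosses_b} declares uncrossed in $\Gamma_j$; since $b_{j+1}$ crosses $r_j$ there, it cannot be an edge of $\Gamma_j$. The paper instead shows that the arc $b_{j+1}^i$ exiting $f_j$ at $x_{j+1}$ is trapped in the triangular face $f_{j-1}\setminus f_j$ (it can cross none of $r_j$, $r_{j-2}$, $b_{j-1}$: the first by simplicity, the second because that arc is uncrossed by invariant~\ref{i:un}, the third because $b_{j+1}$ would then be incident to $B$ and hence parallel to $g$), so $b_{j+1}$'s other endpoint lies in a face that was empty in $\Gamma_j$. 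The two routes are both sound, but the paper's version yields a confinement fact reused immediately in the second half, so the shortcut does not actually save much work.

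For the second assertion your sketch contains both an error and a gap. The error: you assert that the part of $b_{j+1}$ beyond $x_{j+1}$ ``is forced out of $f_{j-1}\setminus f_j$ across $r_{j-2}$'' and then ``winds outward through the nested faces.'' This is backwards: it cannot cross $r_{j-2}$ at all, since the arc of $r_{j-2}$ bounding the triangle is uncrossed by invariant~\ref{i:un}, and (as above) it cannot cross $r_j$ or $b_{j-1}$ either, so the arc never leaves $f_{j-1}\setminus f_j$. The correct consequence, which the paper uses, is that this exterior part cannot meet $r_0$ because $r_0$ does not enter the empty face; there is no outward spiralling. The gap: inside $\overline{f_j}$, ``a planarity argument in the disk'' is not an argument, and the structure of $f_j$ from invariant~\ref{i:f} alone does not forbid $\beta$ from meeting $r_0$ there. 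The paper's key step is this: if $\beta$ meets $r_0$ at $y$, then, since the arc of $r_0$ from $B$ to $x_1$ is uncrossed (invariant~\ref{i:un}) and $r_0$ leaves $f_1\supset f_j$ when it crosses $b_1$ at $x_1$, the edge $r_0$ must cross $\alpha_1$ again to return to $y\in f_j$; it cannot cross $b_1$ twice nor itself, so it crosses $g$, forcing $r_0$ to be incident to $G$ (the special vertex of $g$). Then the crossings $x$ (of $b,g$), $x_g$ (of $r_0,g$) and the vertex $G$ bound a triangle by arcs of $b$, $g$, $r_0$; after $y$ the arc of $b_{j+1}$ heading to $R$ enters this triangle and cannot exit across any of its sides ($b_{j+1}$ is not incident to $B$ or $G$, so cannot cross $b$ or $g$, and it has already spent its single crossing with $r_0$), so it cannot reach $R$ — contradiction. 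That triangle trap is the missing idea, and without it your outline does not close.
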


Lastly, to prove invariants~\ref{i:f} and~\ref{i:tri}, we have to prove the following proposition.

\begin{proposition}
The arc of $b_{j+1}$ between $R$ and $x_{j+1}$ is uncrossed in the drawing~$\Gamma_{j+1}$. \label{prop:uncross}
\end{proposition}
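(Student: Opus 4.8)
## Proof Proposal for Proposition~\ref{prop:uncross}

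\textbf{Overall strategy.} The plan is to show that the new arc of $b_{j+1}$ from $R$ to $x_{j+1}$ lies entirely inside the region $f_{j-1}$ (or rather the triangular face $f_{j-1}\setminus f_j$ that was just carved off), where by the induction hypothesis only a very controlled set of arcs appears. In fact, $b_{j+1}$ enters the picture by crossing $r_j$ at $x_{j+1}$, which is the \emph{first} crossing of $r_j$ traversing from $B$. I want to argue that the sub-arc of $b_{j+1}$ from $x_{j+1}$ to $R$ is trapped in the small triangular face bounded by the uncrossed arcs of $r_j$, $q_{j-1}$, and $q_{j-2}$ guaranteed by invariant~\ref{i:tri}, and hence it can meet no other edge of $\Gamma_{j+1}$.

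\textbf{Key steps.} First I would invoke the previously-proved Proposition~\ref{prop:R}, so that $b_{j+1}$ is incident to $R$, and Proposition~\ref{prop:r0}, so that $b_{j+1}$ is distinct from all edges of $\Gamma_j$ and in particular does not cross $r_0$. Next, I would use invariant~\ref{i:tri} for the step just completed: $f_{j-1}\setminus f_j$ is an empty triangular face of $\Gamma_j$ whose three bounding arcs are (i) the arc of $r_j$ from $x_j$ to $B$, (ii) the arc of $q_{j-1}$ from $x_j$ to $x_{j-1}$, and (iii) the arc of $q_{j-2}$ from $x_{j-1}$ to $B$; moreover, by Remark~\ref{rem:tri}, the relevant arcs of $r_j$ and $q_{j-1}$ through $x_j$ are uncrossed in $\Gamma_j$. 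Since $x_{j+1}$ is the \emph{first} crossing of $r_j$ encountered from $B$ (by construction), the sub-arc of $r_j$ from $B$ to $x_{j+1}$ is a crossing-free sub-arc of side (i) of the triangle, so $x_{j+1}$ lies on the boundary of this triangular face. Now consider the sub-arc $\beta$ of $b_{j+1}$ that starts at $x_{j+1}$ and heads into the triangular face $f_{j-1}\setminus f_j$. Because the face is empty in $\Gamma_j$, $\beta$ cannot cross anything while it stays inside it; the only way out is across one of the three boundary arcs. Crossing side (i) again would be a second crossing of $b_{j+1}$ with $r_j$, forbidden since $\Gamma$ has no multiple crossings. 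Crossing side (ii) or side (iii) — an arc of $q_{j-1}$ or $q_{j-2}$ — would, by invariants~\ref{i:B}/\ref{i:R}, force $b_{j+1}$ to be incident to the special vertex of that edge; a short case check (using that $b_{j+1}$ is incident to $R$ and that consecutive black/red edges share exactly the vertices $R$ or $B$) shows this either contradicts $b_{j+1}$ being distinct from the previous edges or forces $b_{j+1}$ to pass through a vertex in its interior, which is not allowed. Hence $\beta$ must exit the face through the vertex $R$, which is a corner of the triangle (it is the common endpoint of sides (ii) and (iii) — note here one uses that the special vertex of $q_{j-1}$ is $R$ when $j$ is even, matching invariant~\ref{i:R}). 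Therefore $\beta$ is exactly the arc of $b_{j+1}$ from $x_{j+1}$ to $R$, it stays inside an empty face of $\Gamma_j$, and so it is uncrossed in $\Gamma_{j+1}=\Gamma_j\cup\{b_{j+1}\}$.

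\textbf{Main obstacle.} The delicate part is the bookkeeping of \emph{which} vertex is the special vertex of each boundary arc and ruling out that $\beta$ escapes through side (ii) or (iii); this requires carefully matching the parity of $j$ against invariants~\ref{i:B} and~\ref{i:R} and against Proposition~\ref{prop:R}, and correctly identifying the corner of the triangle at which $R$ sits. A secondary subtlety is confirming that $x_{j+1}$ genuinely lies on the uncrossed portion of side (i) rather than having $r_j$ re-enter the face — but this is immediate from the "first crossing from $B$" choice of $b_{j+1}$ together with Remark~\ref{rem:tri}. Once the escape routes are closed off, the emptiness of the triangular face does the rest, and the conclusion that the arc is uncrossed in $\Gamma_{j+1}$ is automatic.
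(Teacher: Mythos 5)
Your proposed argument has the geometry backwards, and the error is fatal rather than cosmetic. At the crossing $x_{j+1}$, the edge $b_{j+1}$ crosses the arc of $r_j$ that lies on the boundary curve $\alpha_j$, so one side of $b_{j+1}$ at $x_{j+1}$ heads into $f_j$ and the other heads into $f_{j-1}\setminus f_j$. Proposition~\ref{prop:r0} already establishes that the arc entering the triangular face $f_{j-1}\setminus f_j$ is the one ending at the \emph{non-}$R$ endpoint of $b_{j+1}$ (it gets trapped there, which is precisely how the paper concludes $b_{j+1}$ is a new edge). Consequently, the arc of $b_{j+1}$ from $x_{j+1}$ towards $R$ enters $f_j$, not $f_{j-1}\setminus f_j$. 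And $f_j$ is \emph{not} an empty region: by invariant~\ref{i:f} it contains $G$, an arc of $b$, and possibly an arc of $r_0$. So "the emptiness of the triangular face does the rest" does not apply to this arc.

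A second, independent error: $R$ is not a corner of the triangle $f_{j-1}\setminus f_j$. Reading invariant~\ref{i:tri} at $i=j$ with $j$ even, the special vertex of $q_{j-1}=b_{j-1}$ is $B$ (by invariant~\ref{i:B}, since $b_{j-1}$ is black), so the triangle is bounded by the arcs of $r_j$, $b_{j-1}$, $r_{j-2}$ meeting at the corners $x_j$, $x_{j-1}$, and $B$. Your claim that $R$ is "the common endpoint of sides (ii) and (iii)" and that "the special vertex of $q_{j-1}$ is $R$ when $j$ is even" both contradict the invariants: that common endpoint is $x_{j-1}$, and the relevant special vertex is $B$. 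The paper's actual proof is necessarily more involved: it takes a hypothetical closest crossing $y$ on the arc from $x_{j+1}$ to $R$, rules out the black edges and $g$ via the fact that $b_{j+1}$ is not incident to $B$ or $G$, rules out $r_0$ via Proposition~\ref{prop:r0}, shows $y$ cannot lie on $\alpha_j$, and finally rules out the remaining red edges $r_i$ with $1\le i<j$ by a separating-curve argument with $\delta = \alpha_0\,\triangle\,\alpha_j$ and invariants~\ref{i:un} and~\ref{i:R}. None of these steps are captured by the triangle-trapping idea, which applies to the other sub-arc of $b_{j+1}$.
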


So the arc of $b_{j+1}$ between $x_{j+1}$ and $R$ is uncrossed in the drawing $\Gamma_{j+1}$. 
Further, the arc of $r_j$ between $x_j$ and $B$ was uncrossed in $\Gamma_j$ as noted in Remark~\ref{rem:tri}. Since $b_{j+1}$ is the only new edge introduced for $\Gamma_{j+1}$, the arcs of $r_j$ between $x_j$ and $x_{j+1}$ as well as between $x_{j+1}$ and $B$ are uncrossed in $\Gamma_{j+1}$. 
The latter in conjunction with the above proposition yields that indeed, there is a face $f_{j+1}$ admitting invariant~\ref{i:f}.
To see this, note that $b_{j+1}$ cannot cross $g$, since it is not incident to $G$ (otherwise it would be parallel to $b$).
Therefore, no additional edges cross $g$ while extending the subdrawing from $\Gamma_j$ to $\Gamma_{j+1}$.
Invariant~\ref{i:f} can be combined with the fact that the arc of $b_{j-1}$ from $R$ to $x_j$ is uncrossed by invariant~\ref{i:un} to conclude that the triangular region $f_j\setminus f_{j+1}$ is  indeed empty and invariant~\ref{i:tri} is established.

This concludes the proof of the lemma in the case when $q_j = r_j$. The second case, $q_j=b_j$, is similar to the first one and can be found in Appendix~\ref{app:case2}.
\hfill$\square$
\end{proof}
Now that we concluded the proof of Lemma~\ref{lem:sequence_of_edges}, we have all the tools to prove Lemma~\ref{lem:adjacent_crossings_case2}.
\begin{lemma} \label{lem:adjacent_crossings_case2}
Let $\Gamma$ be a non-simple fan-planar drawing with the properties established by Corollary~\ref{cor:cor1}. If there is an edge $b=(G,R)$ in $\Gamma$ that crosses an edge $g$ at $x$ and $g$ is incident to $R$, then we can redraw an edge such that the total number of crossings in $\Gamma$ decreases, and the drawing remains fan-planar.
\end{lemma}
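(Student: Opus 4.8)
\textbf{Proof plan for Lemma~\ref{lem:adjacent_crossings_case2}.}
The plan is to invoke Lemma~\ref{lem:sequence_of_edges} with the edges $b=(G,R)$ and $g$. If the first alternative of that lemma holds, we are immediately done: some edge can be redrawn to reduce the total number of crossings while maintaining fan-planarity, which is exactly the conclusion we want. So the real work is the second alternative: we are handed a sequence of edges $r_0, b_1, r_2, \dots, r_k$ satisfying invariants \ref{i:B}--\ref{i:tri}, and we must show that in \emph{this} structured situation we can still perform a crossing-reducing fan-planar redrawing. The natural candidate is to redraw the edge $b$ (hence the figure name \texttt{reroute\_b\_fig} referenced in the excerpt).

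The key geometric picture is the nested family of regions $f_0 \supset f_1 \supset \dots \supset f_{k-1}$, each $f_{i-1}\setminus f_i$ an empty triangle, with $f_i$ containing $G$, the arc of $b$ from $G$ to $x$, and possibly an arc of $r_0$. Invariant~\ref{i:exit} says $b$ meets only the red edges $r_0$ and $r_k$ among all the listed edges, and by Remark~\ref{rem:g_crosses_b} the "inner" arcs of the $q_i,q_{i-1}$ bounding each $f_i$ are uncrossed. The plan is to reroute $b$ starting from $G$: instead of going to $x$ on $g$, push the curve of $b$ through the corridor formed by the stacked empty triangles, i.e.\ along the uncrossed arcs of $r_k, b_{k-1}, r_{k-2}, \dots$ down to $R$. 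Because each $f_{i-1}\setminus f_i$ is an empty triangular face in $\Gamma_i$, this corridor is essentially free of obstructions except for the edges not in the sequence that may stab into the triangles; here one has to argue, using the special-vertex structure (\ref{i:B}, \ref{i:R}) and fan-planarity of the original drawing, that any edge crossing the rerouted $b$ can be charged injectively to a crossing the old $b$ had — in particular the crossing of $b$ with $g$ at $x$ is destroyed and not recreated, so the count strictly drops. One must also verify that the rerouted $b$ does not create an SF1/SF2 violation: the edges it now crosses should all still share the special vertex $B$ on a common side, which is plausible because the corridor hugs red edges all incident to $R$ and black edges all having special vertex $B$, keeping everything on one side of $b$.

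The main obstacle, as I see it, is controlling the edges that are \emph{not} part of the sequence $b,g,r_0,\dots,r_k$ but nonetheless cross the region swept by the redrawn $b$. The invariants guarantee the triangles $f_{i-1}\setminus f_i$ are empty \emph{in the subdrawing $\Gamma_i$}, not in the full drawing $\Gamma$, so foreign edges can pass through them; the argument must show each such edge already crossed the old $b$ (or old $g$, whose crossings we are also entitled to reuse if we simultaneously think of the swept region as bounded by $g$ on one side), giving the injective charging. A secondary subtlety is the behavior near the two "ends": near $G$, where $r_0$ may or may not be present and may be incident to $G$; and near $R$ and $x_k$, where $r_k$ enters — one needs to check that $b$ can be brought all the way to $R$ along $r_k$ without crossing $r_k$ a second time and without an inconsistent crossing orientation. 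I would handle the generic corridor first, then dispatch these two boundary cases separately, mirroring the case analysis style already used in the proof of Lemma~\ref{lem:sequence_of_edges}.
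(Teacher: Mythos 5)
You correctly identify the high-level strategy: invoke Lemma~\ref{lem:sequence_of_edges}, handle the first alternative trivially, and in the second alternative redraw the edge~$b$. But from there the plan diverges from what actually works, and the specifics you propose have real problems. The paper's redrawing of~$b$ is much simpler than what you describe: it routes the new~$b$ from~$R$ tightly along~$g$ up to~$x$, then picks up the old drawing of~$b$ from~$x$ to~$G$, simply skirting the crossing at~$x$. Your plan of ``pushing $b$ through the corridor of stacked triangles along the uncrossed arcs of $r_k,b_{k-1},r_{k-2},\dots$'' is not well-defined and is unlikely to work. For one thing, invariant~\ref{i:un} only guarantees that the initial arcs of $r_i$ for $i<k$ are uncrossed; nothing says the arc of~$r_k$ from~$B$ is uncrossed (in fact it cannot be, since the argument will eventually locate the crossing of $r_k$ with $b$ there). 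For another, those arcs alternately emanate from~$B$ and from~$R$, so they do not concatenate into a corridor that connects~$G$ to~$R$.

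More importantly, you are missing the two substantive steps of the actual argument. First, one must show that the crossing of~$r_k$ with~$b$ lies between~$B$ and~$x_k$ along $r_k$ (not between~$x_k$ and the other endpoint $W$); this is a separate case analysis using the emptiness of $f_{k-2}\setminus f_{k-1}$ and invariant~\ref{i:un}, and it is what makes the eventual $\delta$-curve argument go through. Second, the paper does not do injective charging at all: it proves outright that the redrawn~$b$ incurs \emph{no} new crossings. The key observation is that any new crossing along the $g$-hugging portion of~$b$ would come from an edge~$p$ crossing~$g$, hence incident to $G$; then, using the closed curve~$\delta$ formed by arcs of $r_k$, $r_{k-2}$, and $b_{k-1}$, one shows $G$ and $g$ lie on opposite sides of~$\delta$, and that~$p$ is blocked from crossing~$\delta$ by the special-vertex constraints on $r_k$, $r_{k-2}$, $b_{k-1}$. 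Your ``charge each new crossing to an old one'' proposal is a different, weaker plan that you yourself flag as unverified, and it would need a concrete description of the charging map, which does not appear. As it stands, the proposal has the correct opening move but lacks the actual construction and the decisive argument.
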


\begin{proof}
Let $b=(G,R)$ and $g=(B,R)$ be two adjacent edges which cross at $x$. Their common endpoint is not the special vertex of either of the edges by Corollary~\ref{cor:cor1}.
Thus, the special vertices of $b$ and $g$ must be $B$ and $G$, respectively.
We apply Lemma~\ref{lem:sequence_of_edges} on $b$ and $g$.
If $g$ can be redrawn using Lemma~\ref{lem:sequence_of_edges}, this concludes the proof of Lemma~\ref{lem:adjacent_crossings_case2}.
Assume that $g$ cannot be redrawn. Then we can determine a sequence of edges $r_0, b_1, r_2, \dots , r_k$ with the properties described in Lemma~\ref{lem:sequence_of_edges}. We now describe how the edge $b$ can be redrawn to eliminate the crossing $x$ while maintaining fan-planarity and decreasing the overall number of crossings.

Let the other endpoint of edge $r_k$ be $W$. 
By invariant~\ref{i:exit}, $r_k$ has a crossing with edge $b$. 
First assume this crossing occurs between $x_k$ and $W$, i.e., after $r_k$ enters the triangular region $f_{k-2}\setminus f_{k-1}$ at $x_k$. 
Since $b$ does not enter this region, $r_k$ has to leave it. It cannot cross $b_{k-1}$ again, nor can it cross $r_{k-2}$, because it is not incident to its special vertex $R$ (note that $W\neq R$ since otherwise $r_k$ would be parallel to $g$). 
Finally, it cannot cross $b_{k-3}$, because this is the part of $b_{k-3}$ that is uncrossed by invariant~\ref{i:un}. 
Hence, the crossing of $r_k$ and $b$ cannot lie between $x_k$ and $W$ and must instead lie between $B$ and $x_k$ along $r_k$.
In this case, we claim that edge $b$ can be redrawn. 
Redraw edge $b$ to follow $g$ from $R$ until $x$, and then follow its previous drawing from $x$ until $G$ while avoiding crossing $g$ at $x$, as illustrated in Figure~\ref{fig:reroute_b_fig}. 
We now prove that this redrawing does not introduce any new crossings on $b$.

\begin{figure}[ht]
  \centering
  \includegraphics[page=1,scale=1]
  {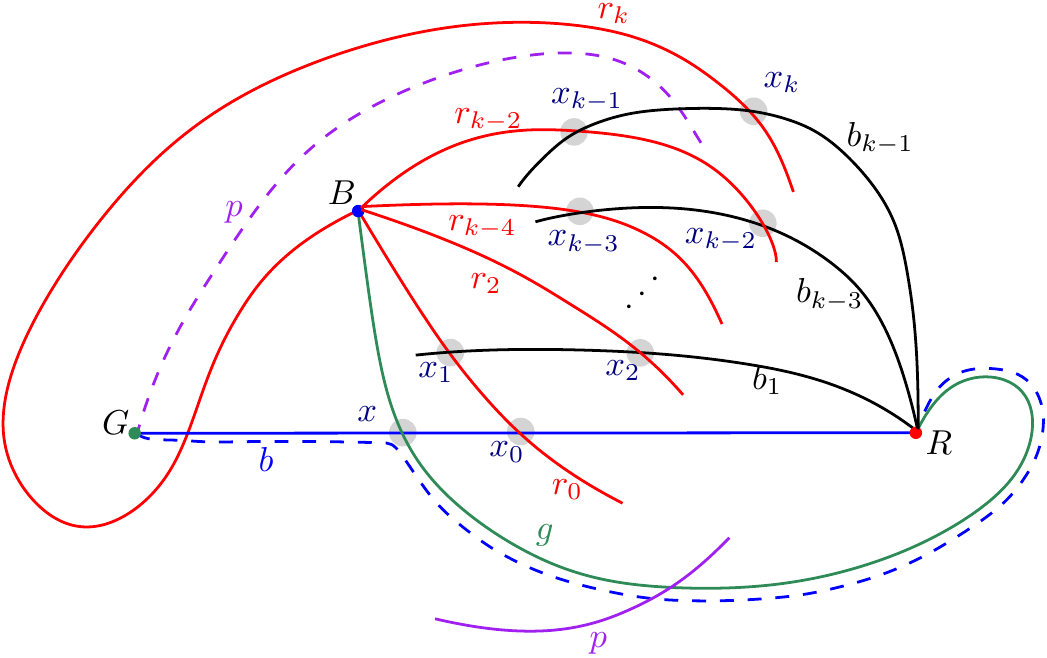}
  \caption{Redrawing of edge $b$.}
  \label{fig:reroute_b_fig}
\end{figure}

\begin{proposition}
Redrawing $b$ does not introduce any new crossings on $b$.
\end{proposition}
\begin{proof}
Assume a new crossing with an edge~$p$ is introduced on $b$ by the redrawing operation.
Since the redrawn part of~$b$ is parallel to a part of~$g$, the edge~$p$ crosses~$g$ as well. 
Consequently, edge $p$ is incident to $G$, the special vertex of $g$. 

Consider the closed curve $\delta$ formed by the arc of $r_k$ between $B$ and $x_k$, the arc of $r_{k-2}$ between $B$ and $x_{k-1}$, and the arc of $b_{k-1}$ between $x_{k-1}$ and $x_{k}$. 
The edge $b$ crosses $r_k$ exactly once, does not cross $r_{k-2}$ due to invariant~\ref{i:exit}, and also does not cross $b_{k-1}$ since the special vertex of $b_{k-1}$ is $B$ due to invariant~\ref{i:B} and $b$ is not incident to $B$. 
This implies that $b$ crosses $\delta$ exactly once and thus $R$ and $G$ have to lie on distinct sides of $\delta$. 
This is illustrated in Figure~\ref{fig:reroute_b_fig}. 
Edge $g$ does not cross any of the edges on the boundary of $\delta$ since $b$ is the only edge crossed by $g$ by Remark~\ref{rem:tri} except possibly for $r_0$, and even if $r_{k-2}=r_0$ the part of $r_{k-2}$ on $\delta$ is still uncrossed by invariant~\ref{i:un}, and therefore $g$ is contained in the same side of $\delta$ as its endpoint $R$.

The edge $p$ crosses $g$ and is incident to $G$, and thus must cross the curve $\delta$ since $g$ and $G$ lie on distinct sides of $\delta$. 
Edge $p$ cannot be incident to $R$ since then $p$ would be parallel to $b$. 
Since $R$ is the special vertex of $r_k$ and $r_{k-2}$ and $p$ is not incident to $R$, $p$ cannot cross the edges $r_k$ and $r_{k-2}$.
Hence, $p$ must cross edge $b_{k-1}$ to cross the curve $\delta$. 
Then the other endpoint of $p$ must be $B$, the special vertex of $b_{k-1}$.
However, the part of~$p$ connecting~$G$ with~$b_{k-1}$ is on the same side as the part of~$r_{k-2}$ between $B$ and $b_{k-1}$.
Consequently, the part of~$p$ connecting~$B$ to~$b_{k-1}$ and the part of~$r_{k-2}$ between $B$ and $b_{k-1}$ lie on distinct sides of~$b_{k-1}$, which contradicts the fan-planarity.
Overall, we have shown that~$p$ cannot cross~$\delta$ and, by extension, it cannot cross~$g$; a contradiction.
\hfill$\square$
\end{proof}

The only redrawn edge is $b$ and no new crossing is introduced on $b$, which ensures that fan-planarity is maintained. 
Additionally, we eliminate the crossing $x$, which decreases the total number of crossings in the drawing. 
\hfill$\square$
\end{proof}

We already described  in the beginning of Section~\ref{sec:main} how Lemmata~\ref{lem:adjacent_crossings_case1}--\ref{lem:adjacent_crossings_case2} can be combined to obtain a proof of Theorem~\ref{thm:main}; we formally summarize the proof in Appendix~\ref{app:main}.

%
%
\bibliographystyle{splncs04}
\bibliography{gd21_fanplanar}

\newpage
\appendix

\section{Proof of Lemma~\ref{lem:adjacent_crossings_case1}}
\label{app:lem1}
\begin{proof}
Every edge that crosses $b$ is incident to $B$. Traverse along the edge $b$ from $B$ to $R$, until an edge $g=(B, W)$ that crosses $b$ is encountered. Let this crossing be $x$.
We redraw the edge $g$ to follow the drawing of $b$ from $B$ until $x$ and then follow its previous drawing from $x$ to $W$ without crossing $b$ at $x$. The rerouting is illustrated in Figure~\ref{fig:adjacent_crossings_case1}.

\begin{figure}[ht]
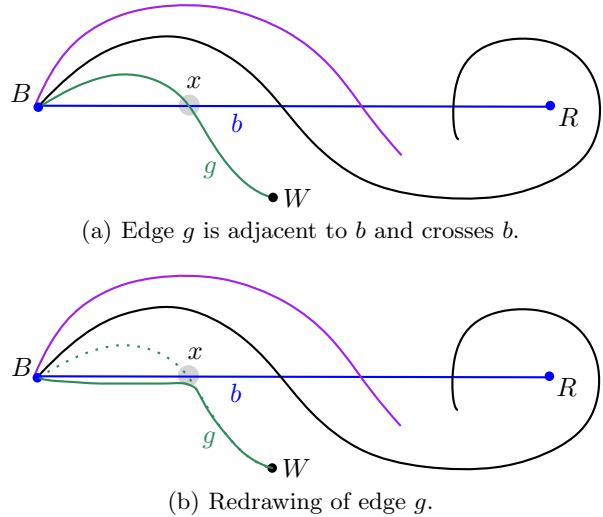

  \centering
  \subfigure[Edge $g$ is adjacent to $b$ and crosses $b$.]{
	 \label{fig:adjacent_crossings_case1_fig1}
	\includegraphics[scale=1]
	{adjacent_crossings_case1_fig1}}

  \subfigure[Redrawing of edge $g$.]{
	 \label{fig:adjacent_crossings_case1_fig2}
	\includegraphics[scale=1]
	{adjacent_crossings_case1_fig2}}

    \caption{Illustration of Lemma \ref{lem:adjacent_crossings_case1}. If $b$ is incident to its special vertex $B$, then all crossings on $b$ are adjacent crossings. We redraw the edge $g$ whose crossing $x$ with $b$ is closest to $B$ along $b$. Redrawing the part of $g$ between $x$ and $B$ along~$b$ cannot introduce any new crossings.}
  \label{fig:adjacent_crossings_case1}
\end{figure}

The part of edge $b$ between $B$ and $x$ has no crossings by definition of $x$.
Hence, the rerouting introduces no new crossings.
In particular, no new crossings are introduced on $g$ and, hence, fan-planarity is maintained.
Finally, since the crossing between $b$ and $g$ at $x$ is eliminated,  the total number of crossings decreases.
\hfill$\square$
\end{proof}

\section{Proofs of the propositions in Lemma~\ref{lem:multi_crossings}}
\label{app:prop12}
\begin{proof}[of Proposition~\ref{prop:z}]
If $p=g$, then we claim that $x=z$, i.e., $z$ is also the first crossing of $p(=g)$ with $b$. 
Assume otherwise that $z$ is the $i^{th}$ crossing of $p(=g)$ with $b$, where $i > 2$. 
Consider the closed curve $\delta$ formed by the arcs of $b$ and $g$ between $x$ and $y$. 
The arc of $p(=g)$ between $y$ and $W$ must cross the curve $\delta$ at $z$.
However, this implies that this arc of $p$ crosses itself or it crosses $b$ such that orienting $p$ towards $B$ and $b$ arbitrarily does not result in a consistent orientation of crossings. 
In both cases, we obtain a contradiction and, hence, $z$ is indeed the first crossing of $p(=g)$ with $b$ if $p=g$.

Now assume $p \neq g$ and assume that the crossing at $z$ is the $i^{th}$ crossing of $p$ with $b$, where $i \geq 2$.
Observe that in a fan-planar drawing, if an edge $e$ has three crossings $x_i,x_j,x_k$ with an edge $f$ such that $x_\ell$, where $\ell\in\{i,j,k\}$, is the $\ell^{th}$ crossing of $e$ with $f$ and $i<j<k$, then $x_i,x_j,x_k$ appear in this order along $f$.
Consequently, the first crossing of $p$ with $b$ has to be located between $z$ and $R$.
Otherwise, when traversing $b$ from $G$ to $R$, the first crossing of $p$ with $b$, the second crossing of $p$ with $b$, and the second crossing of $g$ with $b$ would be encountered in this order, which contradicts our choice of~$g$.
More specifically, the first crossing of $p$ with $b$, say $h$, must be between $y$ and $R$ since the arc of $b$ between $z$ and $y$ has no crossings by construction;
the situation is illustrated in Figure~\ref{fig:multi_crossings_fig1}. 
Consider the closed curve $\delta$ again. 
Since $p$ is incident to $B$ and $h$ is the first crossing of $p$ with $b$, the arc of $p$ starting from $B$ must cross $g$ between $x$ and $y$ to enter the region enclosed by $\delta$ before $p$ crosses $b$ at $h$. 
After $p$ crosses $b$ at $h$,
it has to cross $\delta$ again in order to cross $z$ such that the crossing orientation of $z$ and $h$ is consistent.
However, this second crossing with $\delta$ 
implies that the crossings of $p$ with $g$ or the crossings of $p$ with $b$ are not consistently oriented; a contradiction.

 Hence, in any case, $z$ is the first crossing of $p$ with $b$.
\hfill$\square$
\end{proof}

\begin{proof}[of Proposition~\ref{prop:multifan}]
To show that fan-planarity is maintained, we have to prove that the crossings introduced along the redrawn edge $g$ satisfy the conditions of fan-planarity. 
Observe that new crossings introduced along $g$ must be between $B$ and $z$ and the involved edges have to cross edge $p$ as well. 
Let an edge $r$ cross $p$ (and the redrawn version of $g$) at a point $c$ between $z$ and $B$.
If $p = g$, then fan-planarity is maintained since then the crossings of the redrawing of~$g$ are a subset of the crossings on its original drawing. 

Assume $p \neq g$. Let $\phi$ be the closed curve formed by the old drawing of $g$ between $B$ and $y$, the arc of $b$ between $y$ and $z$, and the arc of $p$ between $z$ and $B$. 
The edge $r$ must cross $\phi$ by definition since $c$ lies on $\phi$.
Since the second crossing of $g$ with $b$ is closer to $R$ than the first crossing and due to fan-planarity, $R$ and $G$ have to lie on distinct sides of $\phi$. At $c$, we split $r$ into two parts. We use $r^R$ to denote the part that enters the side of $\phi$ that contains $R$ -- the other part of $r$ is denoted by $r^G$.

The special vertex of $p$, say $P$, must be incident to edge $b$ since $b$ crosses $p$, and the edge $r$ must be incident to $P$ since $r$ also crosses $p$.
We distinguish two cases, namely $P=G$ and $P=R$.
We show that in both cases, $r$ crosses the part between $B$ and $y$ of the original drawing of $g$.

First, assume $P = G$. Since $P(=G)$ must be on a common side of $p$ for $r$ and $b$, the part of $r$ that is incident to $P(=G)$ has to be $r^R$.
 This implies that $r^R$ has to cross the curve $\phi$ by definition of $r^R$.
Let $s$ be the crossing of $r_R$ with curve $\phi$ that is closest to $c$ along $r_R$.
The crossing $s$ cannot lie on the arc of $p$ on $\phi$, since otherwise $r$ and $p$ would cross as in the configuration~SF2 which is forbidden.
Further, $s$ cannot lie on the arc of $b$ on $\phi$ since this part of $b$ is uncrossed by the definition of~$z$.
Hence, $s$ must lie on the old drawing of $g$ between $B$ and $y$, i.e., along the part of $\phi$ formed by the old drawing of $g$. This implies that $r$ crosses the part between $B$ and $y$ of the original drawing of $g$.

It remains to consider the case that $P=R$.
Since $P(=R)$ is on a common side of $p$ for $r$ and $b$, the part of $r$ that is incident to $P(=R)$ has to be $r^G$.
The arguments why $r$ crosses the part between $B$ and $y$ of the original drawing of $g$ are analogous to those used in the case $P=G$.

We have shown that $r$ crosses the part between $B$ and $y$ of the original drawing of $g$.
The corresponding crossing is eliminated when redrawing $g$, and a crossing between $r$ and $g$ is introduced after the redrawing.
Hence, even though the redrawn version of $g$ crosses $r$ between $z$ and $B$, the number of crossings does not increase.
Moreover, the orientation of the crossings between $r$ and $g$ is consistent with the orientation of the crossings in the redrawn version, i.e., fan-planarity is maintained.
\hfill$\square$
\end{proof}

\section{Proof of Corollary~\ref{cor:cor1}}
\label{app:cor1}

\begin{proof}[of Corollary~\ref{cor:cor1}]
The redrawing procedures guaranteed by Lemma~\ref{lem:multi_crossings} and~\ref{lem:adjacent_crossings_case1} decrease the number of crossings.
Hence, they can be exhaustively applied to~$\Gamma$ to obtain a drawing~$\Gamma'$ with no more crossings than~$\Gamma$ such that~$\Gamma'$ does not satisfy the precondition of  Lemma~\ref{lem:adjacent_crossings_case1} or~\ref{lem:multi_crossings}.
In particular, if an edge is incident to its special vertex, all edges crossing it must be adjacent to it. If there is such an edge, Lemma \ref{lem:adjacent_crossings_case1} is applicable. If there is no such edge, we may choose a new special vertex for that edge, which is not incident to it.
Hence, $\Gamma'$ has the desired properties.
\hfill$\square$
\end{proof}

\section{Proof of the proposition in the base case of Lemma~\ref{lem:sequence_of_edges}}
\label{app:base}

\begin{proof}[of Proposition~\ref{prop:base}]
Assume otherwise and consider the closed curve~$\gamma$ formed by the parts of~$g$ and~$b$ that lead from~$R$ to~$x$.
Both~$G$ and~$B$ are on the same side of~$\gamma$ since there are no multiple crossings.
Since~$r_0$ intersects~$g$, it follows that~$r_0$ is incident to the special vertex~$G$ of~$g$.
Let the arc of $r_{0}$ between $B$ and $x_0$ be denoted by $r_0^B$ and the arc between $G$ and $x_0$ be denoted by $r_0^G$.
Since $B$ must lie on the same side of $b$ with respect to the two crossings $x$ and $x_0$, the arc $r_0^G$ must be the arc on the side of $\gamma$ which does not contain $B$.
However, for $r_0^G$ to be incident to $G$, $r_0^G$ must cross~$\gamma$, thereby crossing~$b$ or~$g$ a second time (recall that $r_0^B$ intersects $g$ by assumption), arriving at a contradiction.
\hfill$\square$
\end{proof}

\section{Proof of the proposition in the inductive step of Lemma~\ref{lem:sequence_of_edges}: Case 1}
\label{app:case1}

\begin{figure}[ht]
  \centering
  \includegraphics[scale=1]
  {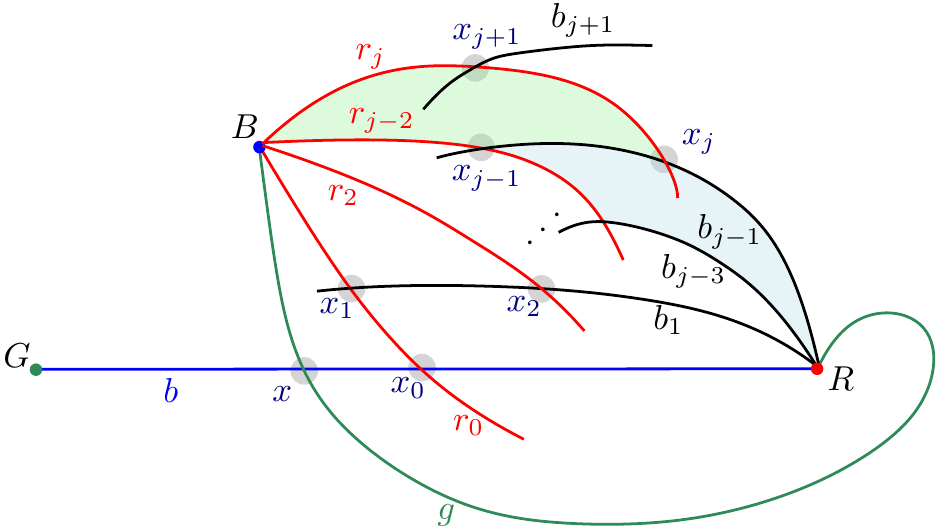}
  \caption{Determining $b_{j+1}$.}
  \label{fig:adjacent_crossings_case2_fig4}
\end{figure}

\begin{proof}[of Proposition~\ref{prop:R}]
Assume $b_{j+1}$ is not incident to $R$. Since $r_j$ crosses $b_{j-1}$ and $b_{j+1}$, they must have a common endpoint, which is not $R$ by assumption. Let the other endpoint of $b_{j-1}$ be $Z$, which implies $b_{j+1}$ is also incident to $Z$. Let the arc of $b_{j+1}$ which exits the face $f_j$ and enters region $f_{j-1} \setminus f_j$ (which is shaded in green in Figure~\ref{fig:adjacent_crossings_case2_fig4}) be denoted by $b_{j+1}^{i}$. To maintain fan-planarity, the vertex $Z$ must be on the same side of $r_j$ along both edges $b_{j-1}$ and $b_{j+1}$, which implies that the arc $b_{j+1}^{i}$ must end at $Z$. Endvertex $Z$ is not in the region $f_{j-1} \setminus f_j$ by invariant~\ref{i:tri}, hence the arc $b_{j+1}^{i}$ must exit the region $f_{j-1} \setminus f_j$. To exit the region $f_{j-1} \setminus f_j$, arc $b_{j+1}^{i}$ must cross $r_j$, $b_{j-1}$ or $r_{j-2}$. Arc $b_{j+1}^{i}$ cannot cross $r_j$ again, and also cannot cross $r_{j-2}$ since the arc of $r_{j-2}$ in this region has no crossings by invariant~\ref{i:un}. Hence, $b_{j+1}^{i}$ must cross $b_{j-1}$ to exit the region $f_{j-1} \setminus f_j$, and must enter the region $f_{j-2} \setminus f_{j-1}$ (which is shaded in blue in Figure~\ref{fig:adjacent_crossings_case2_fig4}). Again, endpoint $Z$ is not in the region $f_{j-2} \setminus f_{j-1}$ by invariant~\ref{i:tri}, and hence the arc $b_{j+1}^{i}$ must exit the region $f_{j-2} \setminus f_{j-1}$ by crossing $b_{j-1}$, $b_{j-3}$ or $r_{j-2}$. However, $b_{j+1}^{i}$ cannot cross $b_{j-1}$ again, cannot cross $b_{j-3}$ since the arc of $b_{j-3}$\footnote{If $j=2$ this denotes $b_{-1}=b$ again} in this region is uncrossed by invariant~\ref{i:un}, and also cannot cross $r_{j-2}$ since $b_{j+1}$ must be incident to $R$ to cross $r_{j-2}$ since $R$ is the special vertex of $r_{j-2}$. Hence, $b_{j+1}^{i}$ cannot exit the region $f_{j-2} \setminus f_{j-1}$ and thus cannot be incident to $Z$. We arrive at a contradiction, which implies $b_{j+1}$ is incident to $R$.
\hfill$\square$
\end{proof}

\begin{figure}[ht]
  \centering
  \includegraphics[scale=1]
  {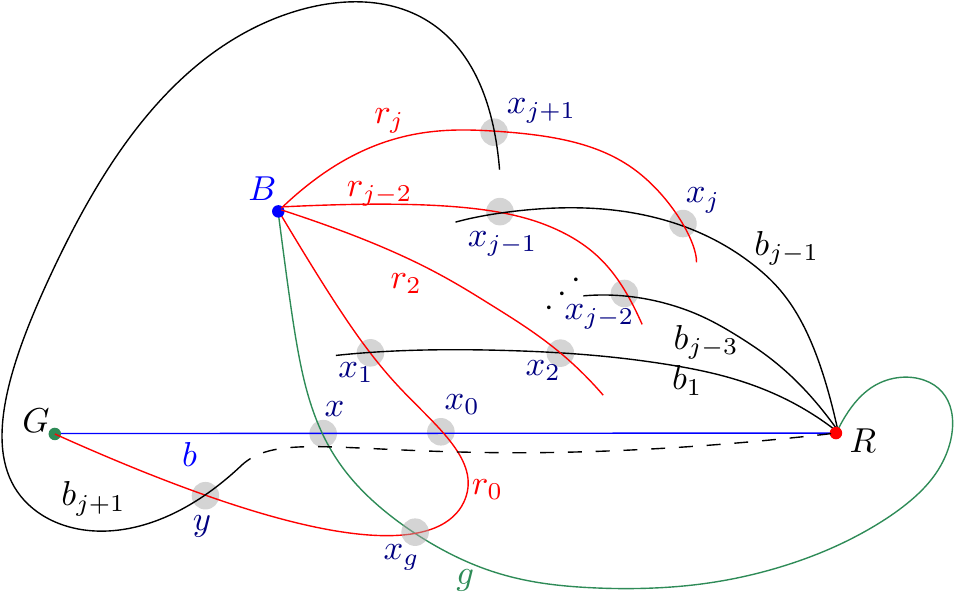}
  \caption{Illustrates the case that $b_{j+1}$ crosses $r_0$ directly after $x_{j+1}$.}
  \label{fig:adjacent_crossings_case2_fig5}
\end{figure}
\begin{proof}[of Proposition~\ref{prop:r0}]
Edge $b_{j+1}$ is incident to $R$, let the other endpoint be $S$. Let the arc of $b_{j+1}$ which exits the face $f_j$ at $x_{j+1}$ and enters region $f_{j-1} \setminus f_j$ be denoted by $b_{j+1}^{i}$. To exit the region $f_{j-1} \setminus f_j$, arc $b_{j+1}^{i}$ would have to cross $r_j$, $b_{j-1}$ or $r_{j-2}$. Arc $b_{j+1}^{i}$ cannot cross $r_j$ again, and also cannot cross $r_{j-2}$ since the arc of $r_{j-2}$ in this region is uncrossed by invariant~\ref{i:un}. For arc $b_{j+1}^{i}$ to cross $b_{j-1}$, edge $b_{j+1}$ must be incident to $B$, the special vertex of $b_{j-1}$. However, $b_{j+1}$ would be parallel to $g$ in that case and the graph has no parallel edges, so this would only be possible if $g=b_{j+1}$, but $g$ does not cross $b_{j-1}$ either, which implies arc $b_{j+1}^{i}$ cannot cross $b_{j-1}$. Thus, the arc $b_{j+1}^{i}$ cannot exit the region $f_{j-1} \setminus f_j$. This shows that the region $f_{j-1}\setminus f_j$ contains an endpoint of $b_{j+1}$. Since this region was empty in $\Gamma_j$ by invariant~\ref{i:tri}, $b_{j+1}$ is distinct from all the previous edges.

Assume edge $b_{j+1}$ crosses edge $r_0$. If the arc $b_{j+1}^{i}$ crosses $r_0$, the crossing must be in region $f_{j-1} \setminus f_j$.  This region was empty in $\Gamma_j$ though, so $r_0$ does not enter it. This also shows $b_{j+1}$ is not incident to $B$ or $G$. 

Therefore $r_0$ has to cross $b_{j+1}$ along the arc of $b_{j+1}$ between $R$ and $x_{j+1}$ at a crossing~$y$. 
We know $b_{j+1}$ is not incident to $G$ nor $B$, so it doesn't cross $g$ nor $b_{j-1}$. 
Therefore this arc lies entirely in $f_j$, since none of the arcs $b_{j-1}$, $r_j$ and $g$ can be crossed by it. 
$r_0$ leaves $f_1\supset f_j$ when crossing $b_1$ at $x_1$. $r_0$ has to cross $\alpha_1$ again in order to re-enter $f_1$. 
To cross $\alpha_1$ again, $r_0$ has to cross $g$, since it cannot cross itself nor $b_1$ again. 
However, if $r_0$ crosses $g$ at a crossing $x_g$, the other end point of $r_0$ must be $G$. The edge $r_0$ thus will enter $f_1$ again after crossing $g$ at $x_g$, crosses $b_{j+1}$ at $y$ and then ends to $G$. 
Consider the arc of $b_{j+1}$ that starts at $y$ and ends at $R$.
After crossing $r_0$ at $y$, this arc enters the triangle $x,x_g,G$ bounded by $b,g$ and $r_0$, see Figure~\ref{fig:adjacent_crossings_case2_fig5} for an illustration. 
However, it is impossible for this arc to exit the triangle since none of the edges bounding the triangle can be crossed (again) by $b_{j+1}$ as shown above, and thus the arc of $b_{j+1}$ cannot be incident to $R$, a contradiction.
\hfill$\square$
\end{proof}

\begin{proof}[of Proposition~\ref{prop:uncross}]
Towards a contradiction assume that the arc of $b_{j+1}$ between $R$ and $x_{j+1}$ is crossed in $\Gamma_{j+1}$. Then it must be crossed by $b$, $g$, or $q_{i}$, where $0 \leq i < k,$ $ q\in\{r,b\}$. 

Let $y$ be the crossing along the arc of $b_{j+1}$ between $x_{j+1}$ and $R$ that is closest to $x_{j+1}$. $b_{j+1}$ is neither incident to $G$ nor $B$, since it otherwise would be parallel to $b$ or $g$ respectively. Therefore $y$ cannot lie on any edge with these special vertices. This excludes the black edges, $b$ and $g$. By Proposition~\ref{prop:r0}, $b_{j+1}$ also does not cross $r_0$.

Further $y$ cannot lie on the arc of $r_j$ between $B$ and $x_{j+1}$ since this arc is uncrossed by invariant~\ref{i:un}. Thus, $y$ cannot lie on the curve $\alpha_j$, and must lie in the interior of $f_j$.

Since $b_{j+1}$ crosses $r_j$ at $x_{j+1}$ already, $y$ cannot lie on $r_j$.

Hence, $y$ must lie on a red edge $r_i$ where $1 \leq i < j$. Let $\delta$ be the closed curve formed by the arc of $b$ between $B$ and $x_0$, the arc of $r_0$ between $R$ and $x_0$, the arc of $r_j$ between $B$ and $x_j$ and the arc of $b_{j-1}$ between $R$ and $x_j$.  Invariant~ \ref{i:tri} applied to $i, i+1,i+2$ shows that the entirety of edge $r_i$ is outside $f_{i+2}\supset f_j$. Invariant~\ref{i:tri} applied to $i,i-1,i-2$ shows that it is completely contained in $f_{i-2}\subset f_0$. Now let $\delta=\alpha_0\triangle \alpha_j$, 
then edge $r_{i}$ must cross $\delta$ since $y$ lies in $f_j\subset f_0$, on the other side of the curve. To cross $\delta$, $r_i$ must cross one of the arcs of the curve. $r_i$ cannot cross $b$ or $b_{j-1}$ since these arcs are uncrossed by invariant~\ref{i:un}, and cannot cross $r_0$ or $r_j$ since it is not incident to $R$, since invariant~\ref{i:R} would imply it is an edge parallel to $g$. Thus, $y$ does not lie on a red edge.

This proves the proposition that the arc of $b_{j+1}$ between $x_{j+1}$ and $R$ is uncrossed in the drawing $\Gamma_{j+1}$.
\hfill$\square$
\end{proof}
\section{Proof of the inductive step of Lemma~\ref{lem:sequence_of_edges}: Case 2}
\label{app:case2}
\paragraph{Case 2:} $q_j = b_j$. Note that in this case, there is nothing to prove for invariant~\ref{i:R}. 

If the edge $b_j$ has no crossings between $R$ and $x_j$, then we could redraw $g$ along this part of $b_j$ and along the arc of $r_{j-1}$ from $x_j$ to $B$. $g$ would then be uncrossed by invariant~\ref{i:un} and the lemma is proved.

 We determine the edge $r_{j+1}$ as follows: traverse along $b_j$ from $R$ towards $x_j$ until we encounter the first edge which crosses $b_j$. Let this edge be $r_{j+1}$. We prove that $r_{j+1}$ satisfies the invariants.

Invariant~\ref{i:un} is true by definition. We now prove invariant~\ref{i:B} for $b_j$ and $r_{j+1}$.

\begin{proposition}
Edge $r_{j+1}$ is incident to $B$.
\end{proposition}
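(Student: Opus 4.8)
The plan is to derive this proposition as the mirror image of Proposition~\ref{prop:R}, obtained by interchanging the roles of $R$ and $B$ and of the ``red'' and ``black'' edges. Concretely, $b_j$ will play here the role that $r_j$ played there, the previously constructed edge $r_{j-1}$ plays the role of $b_{j-1}$, the newly chosen edge $r_{j+1}$ plays the role of $b_{j+1}$, and $b_{j-2}$, $r_{j-3}$ play the roles of $r_{j-2}$, $b_{j-3}$. So first I would assume, for contradiction, that $r_{j+1}$ is not incident to $B$. Since $r_{j-1}$ and $r_{j+1}$ both cross $b_j$, fan-planarity forces them to share an endpoint, which is the special vertex of $b_j$; by invariant~\ref{i:B} the edge $r_{j-1}$ is incident to $B$, and since $r_{j+1}$ is not, this common endpoint must be the endpoint $Z$ of $r_{j-1}$ other than $B$. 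Hence $r_{j+1}$ is incident to $Z$.

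Next I would analyse the sub-arc $r_{j+1}^{i}$ of $r_{j+1}$ that starts at $x_{j+1}$ and leaves the face $f_j$. The fan-planarity condition at $b_j$ (the vertex $Z$ must lie on a common side of $b_j$ with respect to $r_{j-1}$ and $r_{j+1}$) forces this arc to enter the region $f_{j-1}\setminus f_j$ and to end at $Z$, exactly as in Proposition~\ref{prop:R}. By invariant~\ref{i:tri} the region $f_{j-1}\setminus f_j$ is an empty triangular face bounded by arcs of $b_j$, $r_{j-1}$ and $b_{j-2}$, and $Z$ is not in it; the arc $r_{j+1}^{i}$ cannot recross $b_j$ and, as in Proposition~\ref{prop:R}, cannot cross the bounding arc of $b_{j-2}$, which is uncrossed by invariant~\ref{i:un}, so it must cross $r_{j-1}$ and enter the next empty triangular face $f_{j-2}\setminus f_{j-1}$, bounded by arcs of $r_{j-1}$, $b_{j-2}$ and $r_{j-3}$. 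Since $Z$ is not in this face either (invariant~\ref{i:tri}), and since $r_{j+1}^{i}$ can neither recross $r_{j-1}$, nor cross the arc of $r_{j-3}$ (uncrossed by invariant~\ref{i:un}), nor cross $b_{j-2}$ (whose special vertex is $B$ by invariant~\ref{i:B}, while $r_{j+1}$ is not incident to $B$), the arc $r_{j+1}^{i}$ is trapped and cannot reach $Z$. This contradiction shows $r_{j+1}$ is incident to $B$; and since $r_{j-1}$ and $r_{j+1}$ then both cross $b_j$ and pass through $B$, invariant~\ref{i:B} for $b_j$ follows because the graph is simple.

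The hard part will be, as it was in Proposition~\ref{prop:R}, the planar bookkeeping around the two nested empty triangular faces: verifying that $r_{j+1}^{i}$ really does enter $f_{j-1}\setminus f_j$, and checking that each of the bounding arcs is excluded for the correct reason --- a repeated crossing of the same pair of edges, invariant~\ref{i:un}, or the special-vertex/parallel-edge obstruction. Special care is needed in the low-index degenerate cases: for $j=3$ the arc of $r_{j-3}$ is an arc of $r_0$ (as in the footnote for $j=2$ in Case~1), and for $j=1$ the edge $b_{j-2}$ is $q_{-1}=b$ and only a single triangular face $f_0\setminus f_1$ is available, so the role of invariant~\ref{i:un} for $b_{j-2}$ is taken over by the fact that the relevant arc of $b$ between $R$ and its first crossing $x_0$ is crossing-free in $\Gamma_1$, together with invariant~\ref{i:exit}. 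These degenerate cases are where I would expect to spend the most effort.
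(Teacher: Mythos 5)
Your proposal follows the same mirror-image strategy as the paper for the generic case, and the two-nested-empty-triangles trapping argument is carried out with the correct substitutions of indices, colors, and special vertices. That part is sound.

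The genuine gap is your treatment of the $j=1$ degenerate case. You correctly observe that only the single triangular face $f_0\setminus f_1$ is available, but your proposed repair (``the role of invariant~\ref{i:un} for $b_{j-2}$ is taken over by \dots together with invariant~\ref{i:exit}'') does not address the actual problem. The arc of $b$ from $R$ to $x_0$ being crossing-free is already exactly what invariant~\ref{i:un} says for $i=-1$, so this is not a missing ingredient; it simply lets you conclude, as in the generic case, that $r_{j+1}^{i}$ must exit $f_0\setminus f_1$ by crossing $r_0$. The trouble is what happens \emph{after} that crossing: for $j>1$ the arc then enters the second empty triangle $f_{j-2}\setminus f_{j-1}$ and is trapped, but for $j=1$ that face $f_{-1}\setminus f_0$ does not exist, so after crossing $r_0$ the arc is outside $f_0$ in a region where the trapping argument gives no information, and indeed $Z$ may well lie there. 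Invariant~\ref{i:exit} does not rescue this either. The paper handles $j=1$ with a separate argument of a different flavor: since $r_2$ crosses both $b_1$ and $r_0$, fan-planarity forces their common endpoint $Z$ to be the special vertex of $r_2$, but a local orientation analysis of the two crossings shows that $Z$ ends up on opposite sides of $r_2$, a contradiction. You would need to supply an argument of this kind (or an equivalent one) for $j=1$; as written your proof does not close that case.
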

\begin{proof}
Assume $r_{j+1}$ is not incident to $B$. Since $b_j$ crosses $r_{j-1}$ and $r_{j+1}$, both the edges $r_{j-1}$ and $r_{j+1}$ must have a common endpoint, which is not $B$ by assumption. Let the other endpoint of $r_{j-1}$ be $Z$, which implies $r_{j+1}$ is also incident to $z_{j-1}$.
Let the arc of $r_{j+1}$ which exits the face $f_j$ and enters region $f_{j-1} \setminus f_j$ (which is shaded in blue in Figure~\ref{fig:adjacent_crossings_case2_fig2}) be denoted by $r_{j+1}^{i}$. To maintain fan-planarity, the vertex $Z$ must be on the same side of $b_j$ along both edges $r_{j-1}$ and $r_{j+1}$, which implies that the arc $r_{j+1}^{i}$ must end at $Z$. Endvertex $Z$ is not in the region $f_{j-1} \setminus f_j$ by invariant~\ref{i:tri}, hence the arc $r_{j+1}^{i}$ must exit the region $f_{j-1} \setminus f_j$. To exit the region $f_{j-1} \setminus f_j$, arc $r_{j+1}^{i}$ must cross $b_j$, $r_{j-1}$ or $b_{j-2}$. Arc $r_{j+1}^{i}$ cannot cross $b_j$ again, and also cannot cross $b_{j-2}$ since the arc of $b_{j-2}$ in this region is uncrossed by invariant~\ref{i:un}. Hence, $r_{j+1}^{i}$ must cross $r_{j-1}$ to exit the region $f_{j-1} \setminus f_j$. If $j>1$, it enters the region $f_{j-2} \setminus f_{j-1}$ (which is shaded in green in Figure~\ref{fig:adjacent_crossings_case2_fig2}). 
Again, endpoint $Z$ is not in the region $f_{j-2} \setminus f_{j-1}$ by invariant~\ref{i:tri}, and hence the arc $r_{j+1}^{i}$ must exit the region $f_{j-2} \setminus f_{j-1}$ by crossing $r_{j-1}$, $r_{j-3}$ or $b_{j-2}$. However, $r_{j+1}^{i}$ cannot cross $r_{j-1}$ again, cannot cross $r_{j-3}$ since the arc of $r_{j-3}$ in this region is uncrossed by invariant~\ref{i:un}, and also cannot cross $b_{j-2}$ since $r_{j+1}$ must be incident to $B$ to cross $b_{j-2}$ since $B$ is the special vertex of $b_{j-2}$. Hence, $r_{j+1}^{i}$ cannot exit the region $f_{j-2} \setminus f_{j-1}$ and thus cannot be incident to~$Z$. 
 
In the special case $j=1$, if $r_2$ crosses $b_1$ and then $r_0$, then the two edges $r_0$ and $b_1$ must have a vertex in common and since neither edge is $g$, it cannot be $R$ or $B$. However, orienting $r_2$ from $B$ to $Z$, $R$ is on the left at $x_2$ and $B$ is on the right. Therefore the other endpoint of $r_0$ and $b_1$, $Z$, will lie on different sides of $r_2$ with respect to $r_0$ and $b_1$, and hence cannot be a special vertex of $r_2$.
In both cases, we arrive at a contradiction, so $r_{j+1}$ is incident to $B$.
\hfill$\square$
\end{proof}

Since $b_j$ crosses both edges $r_{j+1}$ and $r_{j-1}$, which are both incident to $B$, the special vertex of $b_{j}$ is $B$, which proves invariant~\ref{i:B}. 

\begin{figure}[ht]
  \centering
  \includegraphics[scale=1]
  {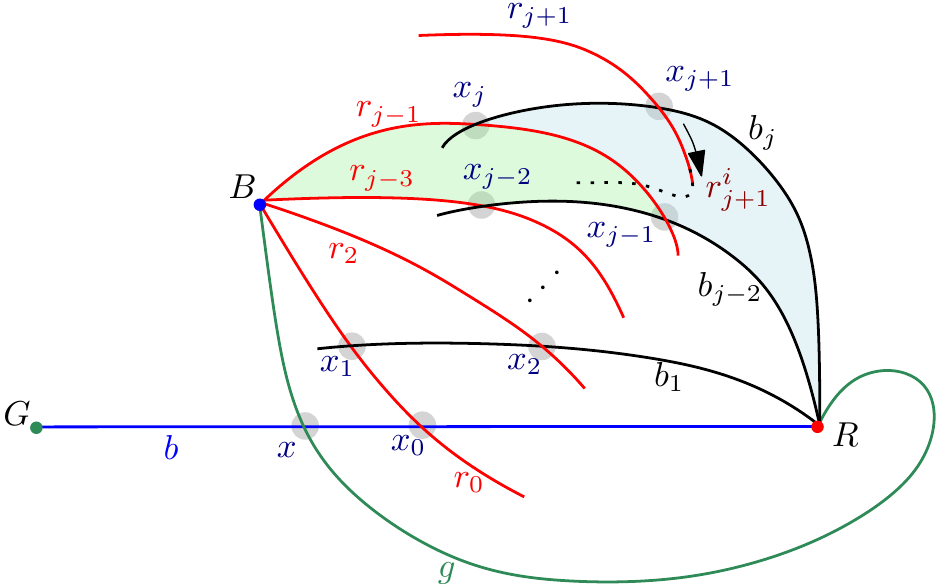}
  \caption{Determining edge $r_{j+1}$.}
  \label{fig:adjacent_crossings_case2_fig2}
\end{figure}

Next, we prove that $r_{j+1}$ is distinct from the previous edges.

\begin{proposition}
 $r_{j+1}$ is distinct from the edges of $\Gamma_j$.
\end{proposition}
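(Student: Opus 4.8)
The plan is to mirror the structure of the proof of Proposition~\ref{prop:r0} from Case~1, adapting it to the roles of $r_{j+1}$ and $b_j$. First I would show that $r_{j+1}$ differs from all edges in $\Gamma_j$ by a containment argument: let $r_{j+1}^{i}$ denote the arc of $r_{j+1}$ that leaves the face $f_j$ at $x_{j+1}$ and enters $f_{j-1}\setminus f_j$. By invariant~\ref{i:un}, the arc of $b_{j-2}$ bounding this triangular region is uncrossed, and $r_{j+1}$ cannot cross $b_j$ a second time; to cross $r_{j-1}$ would force $r_{j+1}$ to be incident to $R$, the special vertex of $r_{j-1}$, making it parallel to $g$ (or equal to $g$, which is excluded since $g$ does not cross $b_j$). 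Hence $r_{j+1}^{i}$ cannot exit $f_{j-1}\setminus f_j$, so an endpoint of $r_{j+1}$ lies in this region, which was empty in $\Gamma_j$ by invariant~\ref{i:tri}; therefore $r_{j+1}\notin\Gamma_j$.

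Next I would argue that $r_{j+1}$ does not cross $b$. Since $r_{j+1}$ is incident to $B$ (by the preceding proposition) and is not incident to $G$ (else it would be parallel to $g$), any crossing with $b$ would require $r_{j+1}$ to be incident to $G$, the special vertex of $b$ — a contradiction. The symmetric statement is that $r_{j+1}$ does not cross $g$ for the same reason ($g$ has special vertex $G$). These observations let me conclude, exactly as in Case~1, that the new arc of $b_j$ between $R$ and $x_{j+1}$ stays inside $f_j$ because it cannot cross any of $g$, $b_{j-1}$, $r_j$ (it is not incident to $B$ or $G$, and $r_j$ is already crossed at $x_j$), and the analogue of Proposition~\ref{prop:uncross} then follows.

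For the case $j=1$, where the ``previous'' region arguments degenerate, I would follow the special-case reasoning already used in the proof that $r_{j+1}$ is incident to $B$: if $r_2$ crossed $r_0$, then $r_0$ and $b_1$ would share an endpoint that is neither $R$ nor $B$, and orienting $r_2$ from $B$ shows this shared endpoint would lie on opposite sides of $r_2$ along $r_0$ and $b_1$, contradicting fan-planarity. Thus $r_2$ does not cross $r_0$, and distinctness from all edges of $\Gamma_1$ follows.

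The main obstacle I expect is bookkeeping the triangular faces correctly when indices are small — specifically handling $j=1$ (where $b_{j-2}=b_{-1}=b$ and $f_{j-2}$ is not defined) and $j=2$ — since the general containment argument via invariant~\ref{i:tri} implicitly assumes enough earlier faces exist. I would isolate these boundary cases explicitly, as the excerpt already does elsewhere, rather than trying to fold them into the generic argument. Everything else is a routine transcription of the Case~1 propositions with red and black edges (and the vertices $R$ and $B$) interchanged.
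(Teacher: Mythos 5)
Your first paragraph is essentially the paper's proof: the arc $r_{j+1}^{i}$ that enters $f_{j-1}\setminus f_j$ at $x_{j+1}$ cannot leave, because $b_j$ cannot be crossed twice, the bounding arc of $b_{j-2}$ is uncrossed by invariant~\ref{i:un}, and crossing $r_{j-1}$ would force $r_{j+1}$ to be incident to $R$ and hence be $g$, which does not cross $b_j$; thus an endpoint of $r_{j+1}$ lies in a region that was empty in $\Gamma_j$ by invariant~\ref{i:tri}. This is exactly the paper's argument, and it already covers $j=1$ uniformly because $b_{j-2}=b_{-1}=b$ is covered by the $q_{-1}=b$ convention of invariant~\ref{i:f} and by invariant~\ref{i:un}, so your third paragraph's special-casing is unnecessary here (the $j=1$ special case arises in the preceding proposition, which proceeds from the contrary assumption that $r_{j+1}$ is not incident to $B$, not in this one).

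Your second paragraph, however, contains a genuine error: the special vertex of $b$ is $B$, not $G$, so a crossing of $r_{j+1}$ with $b$ is perfectly consistent with $r_{j+1}$ being incident to $B$ — indeed invariant~\ref{i:exit} requires $r_k$ to cross $b$, so the conclusion that $r_{j+1}$ never crosses $b$ is false. The paper does not make this claim here; the Case~2 proposition asserts only distinctness (unlike Proposition~\ref{prop:r0} in Case~1, which additionally rules out crossings with $r_0$), and the question of whether $r_{j+1}$ crosses $b$ is deferred to the next proposition, where it is precisely the criterion for termination ($j+1=k$). Since this paragraph is extraneous to the statement being proved, the proof of the proposition itself stands, but you should drop the incorrect claim rather than carry it forward.
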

\begin{proof}
Similarly to the last proof, let the arc of $r_{j+1}$ which exits the face $f_j$ and enters region $f_{j-1} \setminus f_j$ be denoted by $r_{j+1}^{i}$. We claim that this arc cannot leave this region anymore. To do so, it would either have to cross $b_j$ again, or cross $b_{j-2}$ in the part uncrossed by invariant~\ref{i:un}, or cross $r_{j-1}$. In that case, it is incident to $R$ though, and would therefore be $g$, since there are no parallel edges. It is not $g$ though, because $b_j$ does not cross $g$. Therefore, the other endpoint of $r_{j+1}$ lies inside $f_{j-1} \setminus f_j$, an empty region in $\Gamma_j$, establishing the proposition.
\hfill$\square$
\end{proof}

Now, to prove invariants~\ref{i:exit},~\ref{i:f}, and~\ref{i:tri}, we prove that $r_{j+1}$ is uncrossed between $x_{j+1}$ and $B$ in the drawing $\Gamma_{j+1}$, unless $j+1 = k$. 

\begin{proposition} If $j+1=k$, then all invariants hold and the arc of $r_{j+1}$ between $x_{j+1}$ and $B$ crosses $b$.
Otherwise, this arc is uncrossed in the drawing~$\Gamma_{j+1}$.
\end{proposition}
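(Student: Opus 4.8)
The plan is to follow the proof of Proposition~\ref{prop:uncross} from Case~1 almost verbatim, exchanging the roles of~$R$ and~$B$ and of the red and black edges, with one genuinely new ingredient: because $r_{j+1}$ is incident to~$B$, the special vertex of~$b$, the edge~$r_{j+1}$ is --- unlike the edge~$b_{j+1}$ of Case~1 --- permitted by fan-planarity to cross~$b$ (it would then cross~$b_j$ and~$b$, both of which are incident to~$R$, consistently). This is exactly the configuration that terminates the construction: the index~$k$ is, by definition, the first step at which the newly determined red edge crosses~$b$. The content of the statement is thus that $r_{j+1}$ crosses~$b$ precisely when the arc of~$r_{j+1}$ from~$x_{j+1}$ to~$B$ is crossed at all in~$\Gamma_{j+1}$, and that in the non-terminating case this arc is genuinely uncrossed --- which then makes the face~$f_{j+1}$ and the empty triangle $f_j\setminus f_{j+1}$ of invariants~\ref{i:f} and~\ref{i:tri} well defined.

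First I would record, as in the two preceding propositions, that $r_{j+1}$ is incident to~$B$ but not to~$R$ (otherwise $r_{j+1}=(B,R)=g$), so $r_{j+1}$ cannot cross~$g$ (this would force incidence to~$G$ and hence $r_{j+1}\parallel b$), cannot cross any red edge~$r_i$ (this would force incidence to the special vertex~$R$ of~$r_i$, and then $r_{j+1}=g$), and cannot cross~$b_j$ a second time. Now suppose the arc of~$r_{j+1}$ between~$x_{j+1}$ and~$B$ is crossed in~$\Gamma_{j+1}$, let~$y$ be the crossing closest to~$x_{j+1}$ along it, and observe that the subarc from~$x_{j+1}$ to~$y$ is uncrossed in~$\Gamma_{j+1}$; by the above, $y$ lies on~$b$ or on a black edge~$b_i$ with~$i$ odd and $1\le i\le j-2$. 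The crux is to exclude the latter, in complete analogy with Proposition~\ref{prop:uncross}: by invariants~\ref{i:f} and~\ref{i:tri}, each such~$b_i$ lies outside~$f_j$ but inside~$f_0$, so the uncrossed subarc from $x_{j+1}\in\alpha_j$ to $y\in f_j\subset f_0$ would have to meet the closed curve $\delta=\alpha_0\triangle\alpha_j$, whose arcs come from~$r_0$, $b$, $r_{j-1}$ and~$b_j$; but $r_{j+1}$ cannot re-cross~$b_j$, cannot cross the uncrossed arc of~$b$ between~$R$ and~$x_0$, and cannot cross the red edges~$r_0$ or~$r_{j-1}$ --- a contradiction. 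Hence $y$ lies on~$b$.

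This settles both regimes. If the arc of~$r_{j+1}$ between~$x_{j+1}$ and~$B$ is uncrossed in~$\Gamma_{j+1}$, then the curve~$\alpha_{j+1}$ --- formed by~$g$, the uncrossed arc of~$b_j$ from~$R$ to~$x_{j+1}$, and this arc of~$r_{j+1}$ --- bounds a face~$f_{j+1}\subset f_j$ of the form required by~\ref{i:f}, and together with the arc of~$b_{j-1}$ from~$R$ to~$x_j$ (uncrossed by~\ref{i:un}) the region $f_j\setminus f_{j+1}$ is the prescribed empty triangle, establishing~\ref{i:tri}; moreover the same analysis shows that the other arc of~$r_{j+1}$ remains confined between~$\alpha_j$ and~$\alpha_0$ and is therefore disjoint from~$b$, so $b$ does not cross~$r_{j+1}$ and~\ref{i:exit} is maintained with $j+1<k$. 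If instead $y$ lies on~$b$, then $r_{j+1}$ crosses~$b$ exactly once (no multiple crossings), we set $k:=j+1$ and let $W\ne R$ be the other endpoint of~$r_k$, and it remains to verify the outstanding invariants for~$r_k$: since~$r_k$ crosses both~$b_{k-1}$ and~$b$, which are both incident to~$R$, the special vertex of~$r_k$ is~$R$ (invariant~\ref{i:R}); $x_k$ is by construction the first crossing of~$b_{k-1}$ traversed from~$R$, the relevant instance of~\ref{i:un}; and $b$ crosses~$r_k$ while, by the inductive maintenance just described, it crosses no red edge besides~$r_0$ and~$r_k$ (invariant~\ref{i:exit}).

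The step I expect to be most delicate is the nested-face bookkeeping behind the exclusion of the black edges~$b_i$: one must check that the $\delta=\alpha_0\triangle\alpha_j$ argument really rules out \emph{every} such~$b_i$ with $i\le j-2$, in particular the boundary case $i=j-2$ whose edge abuts~$\alpha_j$, and one must handle the small-index exceptions separately --- most notably $j=1$, where~$b_0$ does not exist and~$b$ takes the place of the missing edge, mirroring the special case already encountered in the proof that $r_{j+1}$ is incident to~$B$. The only genuinely new subtlety relative to Case~1 is making sure that the branch where~$y$ lies on~$b$ dovetails with the termination of the overall construction, so that the sequence $r_0,b_1,\dots,r_k$ it produces satisfies all of invariants~\ref{i:B}--\ref{i:tri} and is exactly the sequence consumed by Lemma~\ref{lem:adjacent_crossings_case2}.
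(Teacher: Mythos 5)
Your proposal runs into a genuine gap precisely at the point you yourself predicted would be the ``genuinely new ingredient'': you dismiss the possibility that $r_{j+1}$ crosses~$g$ by claiming this would force incidence to~$G$ and ``hence $r_{j+1}\parallel b$.'' That deduction is false. In Case~1 the analogous step is sound because $b_{j+1}$ is incident to~$R$, so incidence to~$G$ gives $(R,G)=b$ and incidence to~$B$ gives $(R,B)=g$. But here $r_{j+1}$ is incident to~$B$, and the edge $(B,G)$ is parallel neither to~$b=(G,R)$ nor to~$g=(R,B)$; it is a perfectly admissible new edge. The symmetry between Case~1 and Case~2 breaks exactly here, and the paper deals with it by a separate geometric argument: if $y$ lies on~$g$, then $r_{j+1}$ is incident to~$G$, so the other arc $r_{j+1}^i$ of~$r_{j+1}$ (which enters $f_{j-1}\setminus f_j$ at $x_{j+1}$) must terminate at~$G$, a point which lies in the interior of~$f_j$ by~\ref{i:f}; but $r_{j+1}^i$ cannot re-enter~$f_j$ since it cannot cross the arc of~$r_{j-1}$ on $\alpha_j$ (uncrossed by~\ref{i:un}), cannot re-cross~$b_j$, and cannot cross~$g$ a second time --- a contradiction. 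Without some replacement for this argument your exclusion of the case ``$y$ lies on~$g$'' does not stand, and this also undermines the subsequent $\delta$-crossing bookkeeping, which tacitly uses that $r_{j+1}$ cannot cross the $g$-arc of~$\alpha_j$ when claiming the subarc stays inside~$f_j$.

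A secondary, non-fatal remark: once $y$ is forced into the interior of~$f_j$, invariant~\ref{i:f} immediately says that only~$b$ and possibly~$r_0$ have any portion there, so your $\delta=\alpha_0\triangle\alpha_j$ detour to rule out black edges~$b_i$ is unnecessary (and in any case your phrase ``$y\in f_j\subset f_0$'' is inconsistent with $y$ lying on a~$b_i$ that is, as you note, outside~$f_j$). The paper simply invokes~\ref{i:f} at this point. Your treatment of the terminating branch ($y$ on~$b$, set $k:=j+1$, verify~\ref{i:R},~\ref{i:un},~\ref{i:exit}) is correct and matches the paper.
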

\begin{proof}
Towards a contradiction assume that the arc of $r_{j+1}$ between $x_{j+1}$ and $B$ is crossed in $\Gamma_{j+1}$. Then it must be crossed by $b$, $g$ or $q_{i}$, where $i\in\{0,\ldots,k\},$ $q\in \{r,b\}$.

Let $y$ be the crossing along the arc of $r_{j+1}$ between $x_{j+1}$ and $B$ that is closest to $x_{j+1}$. If $y$ lies on the edge $b$, then invariant~\ref{i:exit} is satisfied, $i = k$, and the sequence is maximal. Invariants~\ref{i:f} and \ref{i:tri} are void in this case, so they hold trivially. Invariants~\ref{i:B} and \ref{i:un} have already been proved. Finally, invariant~\ref{i:R} is true for $r_{j+1}$, because it crosses $b_j$ and $b$, both of which are incident to $R$, so $R$ has to be its special vertex.

Assume $y$ lies on the closed curve $\alpha_j$. Recall that the curve $\alpha_j$ is formed by the edge $g$, the arc of $b_j$ between $R$ and $x_j$ and the arc of $r_{j-1}$ between $B$ and $x_j$. The crossing $y$ must lie on $g$ since $r_{j+1}$ cannot cross the arc of $b_j$ again and the arc of $r_{j-1}$ is uncrossed due to invariant~\ref{i:un}.
We know that the special vertex of $g$ is $G$, otherwise the crossing between $b$ and $g$ would have been eliminated by Corollary~\ref{cor:cor1}. This implies that $r_{j+1}$ must be incident to $G$ since the edge $g$ is crossed by $r_{j+1}$ if $y$ lies on $g$. Thus, the other endpoint of $r_{j+1}$, which is the endpoint of arc $r_{j+1}^i$, must be $G$. For arc $r_{j+1}^i$ to end in $G$, it must cross the curve $\alpha_j$. The arc $r_{j+1}^i$ cannot cross the arc of $r_{j-1}$ along $\alpha_j$ since the arc of $r_{j-1}$ along $\alpha_j$ is uncrossed by invariant~\ref{i:un}, and it cannot cross the arc of $b_j$ nor $g$ again. Hence, the crossing $y$ is not on the curve $\alpha_j$, which implies that $y$ lies in the interior of $f_j$. 
However, the interior of $f_j$ contains only $b$ and possibly $r_0$ due to invariant~\ref{i:f}.

 Edge $r_{j+1}$ cannot be incident to $R$, since otherwise $r_{j+1}$ would be an edge that is parallel to $g$. Therefore $r_{j+1}$ cannot cross $r_0$ whose special vertex is $R$. 

\begin{figure}[ht]
  \centering
  \includegraphics[scale=1]
  {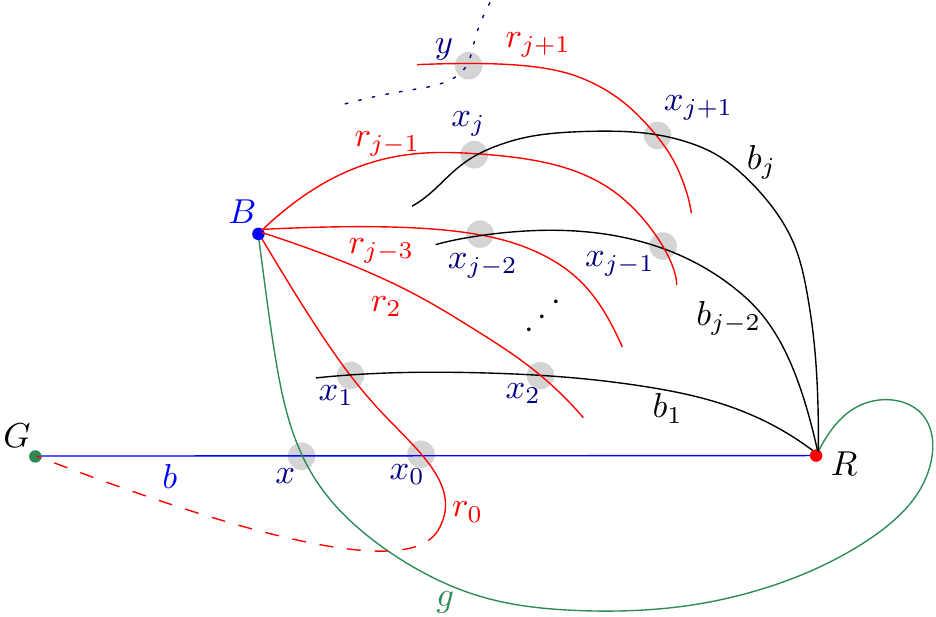}
  \caption{Illustration of the case that $r_{j+1}$ is crossed between $B$ and $x_{j+1}$.}
  \label{fig:adjacent_crossings_case2_fig3}
\end{figure}

This proves the proposition that the arc of $r_{j+1}$ between $x_{j+1}$ and $B$ is uncrossed in the drawing $\Gamma_{j+1}$, unless $j+1 = k$.
\hfill$\square$
\end{proof}

Assume $r_{j+1} \neq q_k$. Then the arc of $r_{j+1}$ between $x_{j+1}$ and $B$ is uncrossed in the drawing $\Gamma_{j+1}$. 
Further, the arc of $b_j$ between $x_j$ and $R$ was uncrossed in $\Gamma_j$ by invariant~\ref{i:tri}. Since $r_{j+1}$ is the only new edge introduced for $\Gamma_{j+1}$, the arcs of $b_j$ between $x_j$ and $x_{j+1}$ as well as between $x_{j+1}$ and $R$ are uncrossed. 
The latter can be used in conjunction with the above proposition to conclude, that indeed, there is a face $f_{j+1}$ admitting invariant~\ref{i:f}.  To see this, note that $r_{j+1}$ cannot cross $g$, since it is not incident to $G$. Therefore, no additional edges cross $g$ when extending the drawing from $\Gamma_j$ to $\Gamma_{j+1}$.
Invariant~\ref{i:f} can be combined with the fact that the arc of $r_{j-1}$ from $B$ to $x_j$ is uncrossed by invariant~\ref{i:un} to conclude that the triangular region $f_j\setminus f_{j+1}$ is empty and invariant~\ref{i:tri} is established. 
If the arc of $r_{j+1}$ between $x_{j+1}$ and $B$ is uncrossed, then it does not intersect $b$. On the other hand, we have already proved that the arc of $r_{j+1}$ that enters $f_j\setminus f_{j+1}$ at $x_{j+1}$ cannot leave this region, and $f_j\setminus f_{j+1}$ does not contain any part of $b$ by invariant~\ref{i:tri}. This finally proves invariant~\ref{i:exit}, and concludes the proof of the lemma in the case where $q_j = b_j$.

\section{Proof of Theorem~\ref{thm:main}}
\label{app:main}

\begin{proof}[of Theorem~\ref{thm:main}]
Let $\Gamma$ be a non-simple fan-planar drawing. 
By Corollary~\ref{cor:cor1} we can find a fan-planar redrawing $\Gamma'$ of $\Gamma$ such that
\begin{itemize}
\item no two edges cross more than once in $\Gamma'$,
\item if two adjacent edges cross in $\Gamma'$, then their common endpoint is not the special vertex of either of the two edges;, and
\item $\Gamma'$ does not have more crossings than $\Gamma$.
\end{itemize}
If $\Gamma'$ still contains adjacent crossings, we can apply Lemma~\ref{lem:adjacent_crossings_case2} to obtain a fan-planar redrawing of~$\Gamma'$ with fewer crossings.
Since the number of crossings is finite, we can iterate this procedure to eventually obtain a simple fan-planar drawing.
\hfill$\square$
\end{proof}

\end{document}